\newtheorem{theorem}{Theorem}[section]
\newtheorem{lemma}[theorem]{Lemma}
\newtheorem{corollary}[theorem]{Corollary}
\theoremstyle{definition}
\newtheorem{definition}[theorem]{Definition}
\theoremstyle{remark}
\begin{document}

\begin{frontmatter}



\title{Effective Construction of a Class of Bent Quadratic Boolean  Functions}

\author[label1]{Chunming Tang}
\author[label2,label3]{Yanfeng Qi}

\address[label1]{School of Mathematics and Information, China West Normal University, Sichuan Nanchong, 637002, China}
\address[label2]{LMAM, School of Mathematical Sciences, Peking University, Beijing, 100871, China}
\address[label3]{Aisino Corporation Inc., Beijing, 100195,  China}

\begin{abstract}
In this paper, we consider the characterization of
the bentness of quadratic Boolean functions of the form $f(x)=\sum_{i=1}^{\frac{m}{2}-1}Tr^n_1(c_ix^{1+2^{ei}})+
Tr_1^{n/2}(c_{m/2}x^{1+2^{n/2}}) ,$
where $n=me$, $m$ is even and $c_i\in GF(2^e)$.
For a general $m$, it is difficult to determine the bentness of these functions. We present the bentness of quadratic Boolean function for
two cases: $m=2^vp^r$ and $m=2^vpq$, where $p$ and $q$ are two distinct
primes. Further, we give  the enumeration of quadratic bent functions for the case $m=2^vpq$.
\end{abstract}

\begin{keyword}
Bent functions \sep Boolean functions \sep
maximum nonlinearity \sep cyclotomic polynomial
\sep semi-bent function

\end{keyword}

\end{frontmatter}


\section{Introduction}
\label{intro}

A bent function, whose Hamming distance to the set of all affine Boolean  functions equals
$2^{n-1}\pm 2^{\frac{n}{2}-1}$, is a Boolean function  with even $n$ variables from $GF(2^n)$ to $GF(2)$. Further, it has maximum  nonlinearity
and the absolute value of its Walsh transform has a constant magnitude \cite{R}. Nonlinearity is an important property for a Boolean function in cryptographic  applications. Much research has been paid on bent functions \cite{CC,C,CCZ,CPT,DLA,HF,KGS1,MLZ,YG}. Since bent functions with maximal nonlinearity have a close relationship with sequences, bent functions are often used in the construction of sequences with maximally linear complexity and low correlation\cite{BK,G,GG,KN, LC,OS, U}. Further, many applications of bent functions can be found in coding theory \cite{MS} and combinatorial design.

As another class of Boolean functions, semi-bent functions are also highly nonlinear.
For an even integer $n$, the Walsh spectra of  bent functions with $n$ variables has the value $\pm 2^{\frac{n}{2}}$ while the Walsh spectra of semi-bent functions
belongs to  $\{0,\pm 2^{\frac{n+2}{2}}\}$. For an odd integer
$n$, the Walsh spectra of semi-bent functions belongs to  $\{0,\pm 2^{\frac{n+1}{2}}\}$. Khoo, Gong and Stinson \cite{KGS,KGS1} considered the quadratic Boolean function of the form
$$
f(x)=\sum_{i=1}^{\frac{n-1}{2}}c_iTr_1^n(x^{1+2^i}),
$$
where $n$ is odd,  $Tr_1^{n}(x)$ is the trace function from $GF(2^n)$ to $GF(2)$ and $c_i\in GF(2)$. They proved that $f(x)$ is semi-bent if and only if
$$
gcd(c(x),x^n+1)=x+1,
$$
where $c(x)=\sum_{i=1}^{\frac{n-1}{2}}c_i(x^i+x^{n-i})$.

Charpin, Pasalic and Tavernier \cite{CPT} generalized Khoo et al.'s results to even
$n$ and considered quadratic functions of the form
$$
f(x)=\sum_{i=1}^{\lfloor\frac{n-1}{2}\rfloor}c_iTr_1^n(x^{1+2^i}),c_i\in GF(2).
$$
When $n$ is even, they proved that $f(x)$ is semi-bent if and only if
$$
gcd(c(x),x^n+1)=x^2+1,
$$
where $c(x)=\sum_{i= 1}^{\frac{n-2}{2}}c_i(x^i+x^{n-i})$.
For odd $n$, they investigated the conditions for the semi-bent functions of
$f(x)$ with three and four trace terms.

For further generalization, Ma, Lee and Zhang \cite{MLZ} applied techniques from \cite{KGS1} and considered the quadratic Boolean functions of the form
\begin{align}\label{mbent}
f(x)=\sum_{i=1}^{\frac{n-2}{2}}c_iTr_1^n(x^{1+2^i})+Tr_1^{n/2}
(x^{1+2^{\frac{n}{2}}}),
\end{align}
where $c_i\in GF(2)$ and  $Tr_1^{n/2}(x)$ is the trace function from $GF(2^{\frac{n}{2}})$ to $GF(2)$. They proved that $f(x)$ is a bent function if and only if
$$
gcd(c(x),x^n+1)=1,
$$
where $c(x)=\sum_{i=1}^{\frac{n-2}{2}}c_i(x^i+x^{n-i})+x^{n/2}$. For some special cases of $n$, Yu and Gong \cite{YG}
considered the concrete constructions of bent functions of the form (\ref{mbent}) and presented some enumeration results.

Hu and Feng \cite{HF} generalized results of Ma, Lee and Zhang \cite{MLZ} and studied the quadratic  Boolean functions of the form
\begin{align}\label{hfbent}
f(x)=\sum_{i=1}^{\frac{m-2}{2}}c_iTr_1^n(\beta x^{1+2^{ei}})+Tr_1^{n/2}(\beta x^{1+2^{\frac{n}{2}}}),
\end{align}
where $c_i\in GF(2)$, $n=em$, $m$ is even and $\beta\in GF(2^e)$. They obtained that $f(x)$ is bent if and only if
$$gcd(c(x),x^m+1)=1,
$$
where $c(x)=\sum_{i=1}^{\frac{m-2}{2}}c_i(x^i+x^{m-i})+x^{m/2}$. Further, they presented the enumerations of bent functions for some specified $m$.
Note that $\beta\in GF(2^e)$, then $(\beta^{2^{e-1}})^{1+2^{ei}}=\beta^{2^e}=\beta$. The function $f(x)$ of the form (\ref{hfbent}) satisfies that
$$
f(x)=\sum_{i=1}^{\frac{m-2}{2}}c_iTr_1^n((\beta^{2^{e-1}} x)^{1+2^{ei}})+Tr_1^{n/2}((\beta^{2^{e-1}} x)^{1+2^{\frac{n}{2}}}),
$$
where $c_i\in GF(2)$. From the transformation $x\longmapsto \beta^{2^{e-1}}x$, a bent function of the form (\ref{hfbent}) is changed into a bent function  of the form  (\ref{mbent}). Actually, (\ref{hfbent}) does not introduce new bent functions.

In \cite{TQX}, we considered the bentness of quadratic Boolean functions of the form
\begin{align}
f(x)=\sum_{i=1}^{\frac{n}{2}-1}Tr^n_1(c_ix^{1+2^i})+ Tr_1^{n/2}(c_{n/2}x^{1+2^{n/2}}),\label{mgb}
\end{align}
where $c_i\in GF(2^n)$ for $1\leq i\leq \frac{n}{2}-1$ and
$c_{n/2}\in GF(2^{n/2})$. The bentness of these functions can be characterized by a class of linearized permutation polynomials. Further, we generalized quadratic Boolean functions in \cite{HF,MLZ} and study Boolean functions of the form
\begin{align}
f(x)=\sum_{i=1}^{\frac{m}{2}-1}Tr^n_1(c_ix^{1+2^{ei}})+Tr_1^{n/2}(c_{m/2}
x^{1+2^{n/2}})~\label{mmgb}
\end{align}
where $n=em$, $m$ is even and $c_i\in GF(2^e)$. For application, we presented some new quadratic bent functions and gave methods of constructing new quadratic bent functions from known quadratic bent functions. Finally, we presented enumerations of bent functions of the form (\ref{mmgb}) for the  case $m=2^{v_0}p^r$ and $gcd(e,p-1)=1$, where $v_0>0$, $r>0$, $p$ is an odd prime satisfying $ord_p(2)=p-1$ or $ord_p(2)=(p-1)/2$ ($(p-1)/2$ is odd).

In this paper, we consider the construction of quadratic bent functions of the form (\ref{mmgb}) for the following two cases:

{\rm (i)} $m=2^{v}p^r$ and $gcd(e,p-1)=1$, where $v>0$, $r>0$ and $p$ is an odd prime satisfying $ord_p(2)=p-1$ or $ord_p(2)=(p-1)/2$ ($(p-1)/2$ is odd).

{\rm (ii)}$m=2^vpq$, where $v>0$, $p, q$ are two odd primes satisfying that $gcd(p-1,q-1)=2$, $ord_p(2)=p-1$, $ord_q(2)=q-1$, $2|\frac{(p-1)(q-1)}{4}$ and $gcd(e,(p-1)(q-1))=1$.

Conditions of $c_i$ for making such a quadratic bent function are given.
Further, we have the results of Yu and Gong \cite{YG} by setting
$e=1$ in Case {\rm (i)}. And we present the enumeration for quadratic bent functions in Case {\rm(ii)}.

The rest of the paper is organized as follows: Section 2 gives some results on quadratic Boolean functions. Section 3 gives the construction and enumeration of quadratic bent functions for two cases. Section 4 concludes for this paper.

\section{Preliminaries}

In this section, some notations are given first. Let $GF(2^n)$ be the finite field
with $2^n$ elements. Let $GF(2^n)^*$ be the multiplicative group of $GF(2^n)$.
Let $e|n$, the trace function $Tr^n_e(x)$ from $GF(2^n)$ to $GF(2^e)$ is defined by
$$
Tr^n_e(x)=x+x^{2^e}+\cdots+x^{2^{e(n/e-1)}},~~~~x\in GF(2^n).
$$
The trace function satisfies that

{\rm (1)} $Tr^n_e(x^{2^e})=Tr^n_e(x)$, where $x\in GF(2^n).$

{\rm (2)} $Tr^n_e(ax+by)=aTr_{e}^{n}(x)+bTr_{e}^n(y)$,
where $x,y\in GF(2^n)$ and $a,b\in GF(2^e).$

When $n$ is even, a quadratic Boolean function from $GF(2^n)$ to $GF(2)$ can be represented by
\begin{align}
f(x)=\sum_{i=0}^{\frac{n}{2}-1}Tr^n_1(c_ix^{1+2^{i}})+Tr_1^{n/2}(c_{n/2}
x^{1+2^{n/2}}),\label{eq}
\end{align}
where $c_i\in GF(2^n)$ for $0\leq i \leq \frac{n}{2} $ and $c_{n/2}\in GF(2^{\frac{n}{2}}).$

When $n$ is odd, $f(x)$ can be represented by
\begin{align}
f(x)=\sum_{i=0}^{\frac{n-1}{2}}Tr^n_1(c_ix^{1+2^{i}}),\label{oq}
\end{align}
where $c_i\in GF(2^n)$.

For a Boolean function $f(x)$ over $GF(2^n)$, the Hadamard transform is defined by
$$
\hat{f}(\lambda)=\sum_{x\in GF(2^n)} (-1)^{f(x)+Tr^n_2(\lambda x)}, \lambda\in GF(2^n).
$$
For a quadratic Boolean function $f(x)$ of the form (\ref{eq}) or (\ref{oq}), the distribution of the  Hadamard transform can be described by the bilinear form
\begin{align*}
Q_f(x,y)=f(x+y)+f(x)+f(y).
\end{align*}
For the bilinear form $Q_f$, define
\begin{align}
K_f=\{x\in GF(2^n):Q_f(x,y)=0,\forall y \in GF(2^n)\} \label{Kf}
\end{align}
and $k_f=dim_{GF(2)}(K_f)$. Then $2|(n-k_f)$. The distribution of the Hadamard
transform values of
$\hat{f}(\lambda)$  is given in the following theorem \cite{HK}.
\begin{theorem}\label{vd}
Let $f(x)$ be a quadratic Boolean function of the form (\ref{eq}) or (\ref{oq}) and $k_f=dim_{GF(2)}(K_f)$, where $K_f$ is defined in (\ref{Kf}). The distribution of the Hadamard transform values of $f(x)$ is given by
\begin{align*}
\hat{f}(\lambda)=\begin{cases}0,&2^{n}-2^{n-k_f}~times \cr 2^{\frac{n+k_f}{2}},& 2^{n-k_f-1}+2^{\frac{n-k_f}{2}-1} ~times\cr -2^{\frac{n+k_f}{2}},& 2^{n-k_f-1}-2^{\frac{n-k_f}{2}-1}~times.\end{cases}
\end{align*}
\end{theorem}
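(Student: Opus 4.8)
The plan is to reduce the computation of $\hat f(\lambda)$ to a counting problem governed purely by the linear algebra of the symmetric bilinear form $Q_f$. First I would recall the standard fact that for a quadratic function $f$ over $GF(2^n)$ the map $(x,y)\mapsto Q_f(x,y)=f(x+y)+f(x)+f(y)$ is $GF(2)$-bilinear and alternating (since $Q_f(x,x)=f(0)=0$, using that $f$ has no constant term in the representations (\ref{eq}), (\ref{oq})); hence $K_f$ as defined in (\ref{Kf}) is precisely its radical, a $GF(2)$-subspace of dimension $k_f$, and $Q_f$ descends to a \emph{nondegenerate} alternating form on the quotient $V=GF(2^n)/K_f$, which forces $\dim_{GF(2)}V=n-k_f$ to be even, i.e. $2\mid (n-k_f)$.

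Next I would analyze $|\hat f(\lambda)|$ via its square. Writing
\begin{align*}
\hat f(\lambda)^2=\sum_{x,z\in GF(2^n)}(-1)^{f(x)+f(z)+Tr^n_1(\lambda(x+z))},
\end{align*}
substitute $z=x+y$ and use $f(x)+f(x+y)=Q_f(x,y)+f(y)$ to get
\begin{align*}
\hat f(\lambda)^2=\sum_{y\in GF(2^n)}(-1)^{f(y)+Tr^n_1(\lambda y)}\sum_{x\in GF(2^n)}(-1)^{Q_f(x,y)}.
\end{align*}
The inner sum over $x$ is $2^n$ when $y\in K_f$ and $0$ otherwise, because $x\mapsto Q_f(x,y)$ is a linear functional that is either identically zero (exactly when $y\in K_f$) or balanced. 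Hence $\hat f(\lambda)^2=2^n\sum_{y\in K_f}(-1)^{f(y)+Tr^n_1(\lambda y)}$. Now I would observe that the restriction of $f$ to the subspace $K_f$ is \emph{affine}: for $y_1,y_2\in K_f$ we have $f(y_1+y_2)=f(y_1)+f(y_2)+Q_f(y_1,y_2)=f(y_1)+f(y_2)$, so $y\mapsto f(y)$ is linear on $K_f$, and therefore $y\mapsto f(y)+Tr^n_1(\lambda y)$ is a linear functional on $K_f$. Consequently the sum $\sum_{y\in K_f}(-1)^{f(y)+Tr^n_1(\lambda y)}$ equals $2^{k_f}$ if that functional vanishes on $K_f$ and $0$ otherwise; this yields $\hat f(\lambda)\in\{0,\pm 2^{(n+k_f)/2}\}$ and shows that the nonzero values occur exactly for $\lambda$ in an affine subspace of dimension $n-k_f$, i.e. for $2^{n-k_f}$ values of $\lambda$; the zero value occurs $2^n-2^{n-k_f}$ times.

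Finally I would pin down the split between $+2^{(n+k_f)/2}$ and $-2^{(n+k_f)/2}$ by Parseval's identity. Summing $\hat f(\lambda)$ over all $\lambda$ gives $2^n(-1)^{f(0)}=2^n$, and summing $\hat f(\lambda)^2$ gives $2^{2n}$; if $N_+$ and $N_-$ denote the respective multiplicities then $N_++N_-=2^{n-k_f}$ and $N_+-N_-=2^{(n-k_f)/2}$ (the latter after dividing the first-moment identity by $2^{(n+k_f)/2}$ and using that the contribution of a nonzero quadratic-form sign is controlled by the Arf invariant of the nondegenerate form on $V$, which here is $0$ since $f(0)=0$). Solving gives $N_+=2^{n-k_f-1}+2^{(n-k_f)/2-1}$ and $N_-=2^{n-k_f-1}-2^{(n-k_f)/2-1}$, as claimed. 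The only genuinely delicate point is the sign determination in the last step: one must know that the quadratic form induced by $f$ on $GF(2^n)/K_f$ is of $+$ type (Arf invariant $0$), which follows because $f$ vanishes at $0$ and hence cannot be of minus type on the whole space; everything else is routine bilinear-form bookkeeping. Since this statement is quoted from~\cite{HK}, I would cite it for the sign computation rather than reproving the Arf-invariant classification from scratch.
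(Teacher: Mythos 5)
The paper never proves Theorem~\ref{vd}: it is quoted from \cite{HK} without proof, so there is no in-paper argument to compare against. Your proof is the standard one and is essentially correct. The chain of steps --- $Q_f$ is alternating bilinear with radical $K_f$; the squaring identity $\hat f(\lambda)^2=2^n\sum_{y\in K_f}(-1)^{f(y)+Tr^n_1(\lambda y)}$; the observation that $f$ restricts to a \emph{linear} functional on $K_f$ (so each $\hat f(\lambda)^2$ is $0$ or $2^{n+k_f}$); the count of $2^{n-k_f}$ values of $\lambda$ for which $Tr^n_1(\lambda\,\cdot)$ agrees with $f$ on $K_f$ (a coset of $K_f^\perp$, by nondegeneracy of the trace pairing); and the first-moment identity $\sum_\lambda\hat f(\lambda)=2^n(-1)^{f(0)}=2^n$ --- determines the full distribution exactly as stated.

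One correction to the last step: the appeal to the Arf invariant is both unnecessary and, as you justify it, wrong. Dividing the first-moment identity $(N_+-N_-)\,2^{(n+k_f)/2}=2^n$ by $2^{(n+k_f)/2}$ already yields $N_+-N_-=2^{(n-k_f)/2}$ with no further input, so no type/Arf classification of the induced form on $GF(2^n)/K_f$ is needed. Moreover, the stated justification ``the Arf invariant is $0$ since $f(0)=0$'' is not a valid inference: every quadratic form vanishes at $0$, including minus-type ones (e.g.\ the norm form $x\mapsto x^{1+2^{n/2}}$ composed with $Tr_1^{n/2}$ on $GF(4)$ has $\hat f(0)=-2$). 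Since the conclusion that sentence decorates is already fully established by the first moment, this is a blemish rather than a gap, but the Arf-invariant clause should simply be deleted.
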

Bent functions as an important class of Boolean functions are defined below.
\begin{definition}
Let $f(x)$ be a Boolean function from $GF(2^n)$ to $GF(2)$. Then $f(x)$ is called a bent function if for any $\lambda \in GF(2^n)$, $\hat{f}(\lambda)\in \{2^{\frac{n}{2}},-2^{\frac{n}{2}}\}$.
\end{definition}
Bent functions only exist in the case for even $n$. From Theorem \ref{vd}, the following result on bent functions is given below.
\begin{corollary}\label{bl}
Let $f(x)$ be a quadratic function of the form (\ref{eq}) over $GF(2^n)$, then $f(x)$ is bent if and only if $K_f=\{0\}$, where $K_f$ is defined in (\ref{Kf}).
\end{corollary}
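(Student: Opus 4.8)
The plan is to obtain Corollary \ref{bl} as an immediate specialization of Theorem \ref{vd} to the case $k_f=0$. The first step is to recall the definition of bentness: $f$ is bent iff $\hat f(\lambda)\in\{2^{n/2},-2^{n/2}\}$ for every $\lambda\in GF(2^n)$; in particular a bent $f$ has $\hat f(\lambda)\neq 0$ for all $\lambda$. Since Theorem \ref{vd} already tells us that the only values taken by $\hat f$ are $0$ and $\pm 2^{(n+k_f)/2}$, the condition ``$\hat f$ never vanishes'' is the pivot of both directions, and the whole argument reduces to counting how often the value $0$ occurs.

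For the ``only if'' direction I would argue: if $f$ is bent, then no $\lambda$ satisfies $\hat f(\lambda)=0$, so the number of such $\lambda$ is $0$; but Theorem \ref{vd} computes this number as $2^n-2^{n-k_f}$, forcing $2^{n-k_f}=2^n$, hence $k_f=0$. Because $K_f$ is a $GF(2)$-subspace whose dimension is $k_f$, $k_f=0$ means $K_f=\{0\}$. For the ``if'' direction, assume $K_f=\{0\}$, so $k_f=0$; substituting $k_f=0$ into the three lines of Theorem \ref{vd} gives that the value $0$ is attained $2^n-2^n=0$ times, the value $2^{(n+k_f)/2}=2^{n/2}$ is attained $2^{n-1}+2^{n/2-1}$ times, and $-2^{n/2}$ is attained $2^{n-1}-2^{n/2-1}$ times. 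These two counts sum to $2^n$, so they account for all $\lambda$, and every value of $\hat f$ lies in $\{2^{n/2},-2^{n/2}\}$; thus $f$ is bent.

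I do not expect a genuine obstacle here — the result is essentially a translation of Theorem \ref{vd} at $k_f=0$. The single point deserving care is the parity remark $2\mid(n-k_f)$ recorded just before Theorem \ref{vd}, which is what makes the exponents $\frac{n+k_f}{2}$ and $\frac{n-k_f}{2}-1$ well-defined integers; when $k_f=0$ and $n$ is even (the only setting in which bent functions exist) this holds automatically, so no extra bookkeeping is required.
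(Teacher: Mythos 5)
Your proposal is correct and follows exactly the route the paper intends: the paper states Corollary \ref{bl} as an immediate consequence of Theorem \ref{vd} without writing out the details, and your argument is precisely the specialization to $k_f=0$ (reading off that the value $0$ occurs $2^n-2^{n-k_f}$ times, which vanishes iff $k_f=0$, i.e.\ iff $K_f=\{0\}$). No gaps; the parity remark you flag is indeed the only point needing a word, and you handle it.
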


In \cite{TQX}, we presented the characterization of bentness of quadratic Boolean functions defined by (\ref{mmgb}).
\begin{theorem} \label{gcd}
Let $n=em$ and $m$ be even.
Let $f(x)$ be a Boolean function defined by
\begin{align*}
f(x)=\sum_{i=1}^{\frac{m}{2}-1}Tr^n_1(c_ix^{1+2^{ei}})+
Tr_1^{n/2}(c_{m/2}x^{1+2^{n/2}})
\end{align*}
where $c_i\in GF(2^e)$, then $f(x)$
is bent if and only if $gcd(c_f(x),x^m+1)=1$, where
\begin{align}\label{kp}
c_f(x)=\sum_{i=1}^{\frac{m}{2}-1}c_i(x+x^{m-i})+c_{m/2}x^{m/2}.
\end{align}
In particular, if $f(x)$ is bent, then $c_{m/2}\neq 0$.
\end{theorem}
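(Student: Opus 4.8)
The plan is to reduce bentness to the bijectivity of an explicit linearized map and then translate that into the stated $\gcd$ condition via the dictionary between $q$-linearized polynomials and their associates. By Corollary~\ref{bl}, $f$ is bent iff $K_f=\{0\}$, so the first step is to put the associated bilinear form $Q_f(x,y)=f(x+y)+f(x)+f(y)$ into the shape $Tr^n_1\big(L_f(x)\,y\big)$. Using $(x+y)^{1+2^{ei}}=x^{1+2^{ei}}+xy^{2^{ei}}+x^{2^{ei}}y+y^{1+2^{ei}}$, the $i$-th trace term of $f$ contributes $Tr^n_1\big(c_i(xy^{2^{ei}}+x^{2^{ei}}y)\big)$; applying the Frobenius $z\mapsto z^{2^{e(m-i)}}$ (which preserves $Tr^n_1$) to the first summand and using $c_i\in GF(2^e)$ together with $2^{e(m-i)}\cdot 2^{ei}\equiv 1\pmod{2^n-1}$ turns it into $Tr^n_1\big((c_ix^{2^{ei}}+c_ix^{2^{e(m-i)}})\,y\big)$. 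The step I expect to require the most care is the half-trace term: here one notes $xy^{2^{n/2}}+x^{2^{n/2}}y=Tr^n_{n/2}\big(xy^{2^{n/2}}\big)\in GF(2^{n/2})$, uses transitivity of the trace with $c_{m/2}\in GF(2^{n/2})$ to get $Tr^{n/2}_1\big(c_{m/2}\,Tr^n_{n/2}(xy^{2^{n/2}})\big)=Tr^n_1\big(c_{m/2}\,xy^{2^{n/2}}\big)$, and then applies $z\mapsto z^{2^{n/2}}$ to reach $Tr^n_1\big(c_{m/2}\,x^{2^{n/2}}y\big)$. Writing $q=2^e$, so that $2^{n/2}=q^{m/2}$, and collecting everything yields
\begin{align*}
Q_f(x,y)=Tr^n_1\big(L_f(x)\,y\big),\qquad L_f(x)=\sum_{i=1}^{\frac m2-1}c_i\big(x^{q^{i}}+x^{q^{m-i}}\big)+c_{m/2}\,x^{q^{m/2}}.
\end{align*}

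Since the pairing $(u,v)\mapsto Tr^n_1(uv)$ on $GF(2^n)$ is non-degenerate, $Q_f(x,y)=0$ for all $y$ is equivalent to $L_f(x)=0$, so $K_f=\ker L_f$, and $f$ is bent iff $L_f$ is injective, equivalently, since $GF(2^n)$ is finite, a bijection of $GF(2^n)$. Now view $GF(2^n)=GF(q^m)$: then $L_f$ is a $q$-linearized polynomial $\sum_{k=0}^{m-1}a_kx^{q^k}$ whose coefficients all lie in $GF(q)$ — which is exactly why the coefficients of $x^{q^i}$ and $x^{q^{m-i}}$ coincide — and its $q$-associate in $GF(q)[X]$ is precisely the polynomial $c_f(X)=\sum_{i=1}^{m/2-1}c_i(X^i+X^{m-i})+c_{m/2}X^{m/2}$ of (\ref{kp}). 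By the standard correspondence — concretely, a normal basis of $GF(q^m)/GF(q)$ identifies $GF(q^m)$ with $GF(q)[\sigma]/(\sigma^m-1)$ as a module over $GF(q)[\sigma]$, $\sigma\colon z\mapsto z^q$, under which $L_f$ is multiplication by $c_f(\sigma)$ — the map $L_f$ is bijective iff $c_f(X)$ is a unit modulo $X^m-1$, i.e. $\gcd\big(c_f(X),X^m-1\big)=1$. Since $X^m-1=X^m+1$ in characteristic $2$, this is the claimed equivalence.

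For the final assertion I argue contrapositively: if $c_{m/2}=0$ then $c_f(1)=\sum_{i=1}^{m/2-1}c_i(1+1)=0$ in $GF(2^e)$, so $X+1\mid c_f(X)$; as $X+1\mid X^m+1$ as well, $\gcd\big(c_f(X),X^m+1\big)\neq 1$, hence $f$ is not bent. Thus every bent $f$ of this form has $c_{m/2}\neq 0$.

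The main obstacle is the explicit evaluation of $Q_f$: no single manipulation is hard, but one has to juggle the two traces $Tr^n_1$ and $Tr^{n/2}_1$, repeatedly use $c_i\in GF(2^e)$ and the Frobenius-invariance of $Tr^n_1$, and push the half-trace term through transitivity so that every contribution folds into one expression $Tr^n_1(L_f(x)y)$ whose symbol has coefficients in $GF(q)$ — that last point is what makes the $\gcd$ legitimately live in $GF(q)[X]$ rather than $GF(q^m)[X]$. After that the linearized-polynomial dictionary is routine, and the non-vanishing of $c_{m/2}$ is immediate.
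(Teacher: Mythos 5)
Your proof is correct, and it follows essentially the same route the paper relies on: Theorem~\ref{gcd} is quoted here without proof from \cite{TQX}, whose method the authors describe precisely as characterizing bentness ``by a class of linearized permutation polynomials,'' i.e.\ reducing $K_f=\{0\}$ to the bijectivity of the $q$-linearized polynomial $L_f$ with coefficients in $GF(2^e)$ and then invoking the $\gcd$ criterion for its $q$-associate. One small point: you silently read the summand in (\ref{kp}) as $c_i(x^i+x^{m-i})$ rather than the printed $c_i(x+x^{m-i})$; that is indeed the intended definition (compare the evaluation of $c_f(\alpha^{-1})$ in the proof of Lemma~\ref{cycQ}), so your identification of $c_f$ with the $q$-associate of $L_f$ is the right one.
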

For a special case, we have the following theorem.
\begin{theorem}
Let $m=2^{v_0}p^r$, where  $v_0\geq 1$,   $r\geq 1$ and $p$ is an odd prime satisfying that  that $ord_p(2)=p-1$ or $ord_p(2)=\frac{p-1}{2}$~$(\frac{p-1}{2}\text{~ is odd})$. Let $gcd(e,p-1)=1$.
Define the Boolean function
\begin{align*}
f(x)=\sum_{i=1}^{\frac{m}{2}-1}Tr^n_1(c_ix^{1+2^{ei}})
+Tr_1^{n/2}(c_{m/2}x^{1+2^{n/2}})
\end{align*}
where $c_i\in GF(2^e)$. The number of bent functions of this form  is
$$
(2^e-1)2^{e\frac{m-2}{2}}\prod_{i=1}^{r}(1-(\frac{1}{2^e})^{\frac{p^i-p^{i-1}}{2}}).
$$
\end{theorem}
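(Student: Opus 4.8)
The plan is to combine Theorem~\ref{gcd} with a standard count of polynomials over $GF(2^e)$ coprime to a fixed modulus, after passing to the variable $y=x+x^{-1}$. By Theorem~\ref{gcd}, $f$ is bent if and only if $\gcd(c_f(x),x^m+1)=1$ in $GF(2^e)[x]$, where $c_f(x)=\sum_{i=1}^{m/2-1}c_i(x^i+x^{m-i})+c_{m/2}x^{m/2}$. Writing $c_f(x)=x\,g(x)$ with $g(x)=\sum_{i=1}^{m/2-1}c_i(x^{i-1}+x^{m-1-i})+c_{m/2}x^{m/2-1}$ and using $\gcd(x,x^m+1)=1$, bentness is equivalent to $\gcd(g(x),x^m+1)=1$. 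Here $g$ satisfies $x^{m-2}g(x^{-1})=g(x)$, and $(c_1,\dots,c_{m/2-1},c_{m/2})\mapsto g$ is a $GF(2^e)$-linear bijection onto the space $\mathcal S$ of all such self-reciprocal polynomials, which has $GF(2^e)$-dimension $m/2$. Thus the number of bent functions of the stated form equals $\#\{g\in\mathcal S:\gcd(g,x^m+1)=1\}$.

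The next step is to linearize the self-reciprocity. In characteristic $2$, $x^m+1=(x^{p^r}+1)^{2^{v_0}}$ with $x^{p^r}+1=\prod_{i=0}^{r}\Phi_{p^i}(x)$ squarefree, so $\gcd(g,x^m+1)=1$ if and only if $g(\zeta)\neq 0$ for every $p^r$-th root of unity $\zeta$. Each $g\in\mathcal S$ can be written uniquely as $g(x)=x^{m/2-1}h(x+x^{-1})$ with $h\in GF(2^e)[y]$, $\deg h\le m/2-1$, and $h\mapsto g$ is a bijection; since $\zeta\neq 0$, $g(\zeta)=0$ iff $h(\zeta+\zeta^{-1})=0$. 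For $i\ge 1$, $\Phi_{p^i}$ is self-reciprocal of even degree $\varphi(p^i)$ and no root of $\Phi_{p^i}$ equals its own inverse, so there is a monic $\psi_{p^i}(y)\in GF(2)[y]$ of degree $\varphi(p^i)/2=(p^i-p^{i-1})/2$ with $\Phi_{p^i}(x)=x^{\varphi(p^i)/2}\psi_{p^i}(x+x^{-1})$, whose roots are the $(p^i-p^{i-1})/2$ distinct values $\zeta+\zeta^{-1}$; for $i=0$ one has $\Phi_1(x)=x+1$ and $g(1)=h(0)$, so the corresponding constraint is $\gcd(h,y)=1$ (consistent with $c_{m/2}\neq 0$). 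A short argument (a common root of two of the polynomials $y,\psi_p,\dots,\psi_{p^r}$ would force two roots of unity of different orders to coincide up to inversion) shows $\Psi:=y\prod_{i=1}^r\psi_{p^i}$ is squarefree, of degree $(p^r+1)/2\le m/2$, and hence $\#\{g\in\mathcal S:\gcd(g,x^m+1)=1\}=\#\{h\in GF(2^e)[y]:\deg h\le m/2-1,\ \gcd(h,\Psi)=1\}$.

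Finally, counting these $h$ is routine: since $\deg\Psi\le m/2$, reduction modulo $\Psi$ carries the $(2^e)^{m/2}$ polynomials of degree $\le m/2-1$ onto $GF(2^e)[y]/(\Psi)$ with all fibres of equal size, $\gcd(h,\Psi)=1$ holds exactly when $h\bmod\Psi$ is a unit, and since $\Psi$ is squarefree the Chinese Remainder Theorem yields
\[
\#\{h:\deg h\le m/2-1,\ \gcd(h,\Psi)=1\}=(2^e)^{m/2}\prod_{\pi}\bigl(1-(2^e)^{-\deg\pi}\bigr),
\]
the product over the distinct monic irreducible factors $\pi$ of $\Psi$ in $GF(2^e)[y]$. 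It remains to identify those factors, and this is where the hypotheses on $p$ and $e$ are essential and where the only real work lies: from $\gcd(e,p-1)=1$ one gets $\mathrm{ord}_p(2^e)=\mathrm{ord}_p(2)\in\{p-1,(p-1)/2\}$; this order lifts to $\mathrm{ord}_{p^i}(2^e)=p^{\,i-1}\mathrm{ord}_p(2^e)$; and deciding whether $-1\in\langle 2^e\rangle\pmod{p^i}$ shows that $\zeta+\zeta^{-1}$ has degree exactly $\varphi(p^i)/2$ over $GF(2^e)$, so each $\psi_{p^i}$ $(1\le i\le r)$ is irreducible over $GF(2^e)$ of degree $(p^i-p^{i-1})/2$; together with the linear factor $y$ these are precisely the distinct irreducible factors of $\Psi$. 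Substituting, the product equals $(1-2^{-e})\prod_{i=1}^r\bigl(1-2^{-e(p^i-p^{i-1})/2}\bigr)$, and since $(2^e)^{m/2}(1-2^{-e})=(2^e-1)2^{e(m-2)/2}$ this gives the stated enumeration. I expect the main obstacle to be exactly the irreducibility of the $\psi_{p^i}$ over $GF(2^e)$ — that is, the order-lifting $\mathrm{ord}_{p^i}(2^e)=p^{i-1}\mathrm{ord}_p(2^e)$ and the $-1$-membership test modulo $p^i$ — which is precisely what the primitivity assumption on $2$ and the parity assumption on $(p-1)/2$ are designed to control; everything else is the reductions and bookkeeping above.
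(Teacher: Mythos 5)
Your overall route is sound and is essentially the explicit version of what the paper does: the paper itself gives no proof of this theorem (it is imported from \cite{TQX}), and the only comparable argument in the text is the enumeration for $m=2^vpq$ at the end of Section 3, which reduces the count to $\#(GF(2^e)^*)\cdot\#(\mathfrak{P}_{m-2}(\frac{x^{pq}+1}{x+1}))$ and cites Lemma 4.4 of \cite{TQX}. Your chain of reductions --- Theorem \ref{gcd}, factoring out $x$, the bijection onto self-reciprocal $g$ of dimension $m/2$, the substitution $y=x+x^{-1}$ carrying the condition to $\gcd(h,\Psi)=1$ with $\Psi=y\prod_{i=1}^r\psi_{p^i}$ squarefree of degree $(p^r+1)/2\le m/2$, the equal-fibre/CRT unit count, and the final arithmetic $(2^e)^{m/2}(1-2^{-e})=(2^e-1)2^{e(m-2)/2}$ --- is correct and matches the $(2^e-1)$ times product-over-cyclotomic-factors shape of the paper's companion proof.

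The gap is the step you explicitly defer: that each $\psi_{p^i}$ is irreducible over $GF(2^e)$ of degree $(p^i-p^{i-1})/2$, equivalently that $Q_{p^i}(x)$ over $GF(2^e)$ is irreducible or a product of two irreducible reciprocal factors (Lemma \ref{cycQ}(3) here, Lemma 4 of \cite{YG}). This is the only place the hypotheses $ord_p(2)\in\{p-1,(p-1)/2\}$, the parity of $(p-1)/2$, and $\gcd(e,p-1)=1$ enter, so without it you have proved nothing specific to this theorem. Granting $ord_{p^i}(2^e)=p^{i-1}\,ord_p(2^e)$, the $-1$-membership test is indeed routine (in the primitive-root case $-1\in\langle 2^e\rangle$ and the degree halves to $\varphi(p^i)/2$; in the odd-order case $\langle 2^e\rangle$ has odd order so $-1\notin\langle 2^e\rangle$ and the degree stays $\varphi(p^i)/2$). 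But the order-lifting itself is \emph{not} a formal consequence of the stated hypotheses: $ord_{p^2}(a)=p\cdot ord_p(a)$ holds iff $a^{ord_p(a)}\not\equiv 1\pmod{p^2}$, a Wieferich-type condition. The known Wieferich prime $p=3511$ has $ord_p(2)=1755=(p-1)/2$ odd and $ord_{p^2}(2)=1755$, so for $r\ge 2$ this step genuinely fails there and the claimed formula with it. This defect is inherited from \cite{YG}/\cite{TQX} rather than introduced by you, but as written your proof is incomplete (for $r\ge 2$) exactly at its crux; for $r=1$ it is complete once you write out the two-line $-1$-membership argument.
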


\section{Constructions of quadratic bent functions}

In this section, we consider quadratic Boolean functions of the form
\begin{align}
f(x)=\sum_{i=1}^{\frac{m}{2}-1}Tr^n_1(c_ix^{1+2^{ei}})+
Tr_1^{n/2}(c_{m/2}x^{1+2^{n/2}}) ,~c_i\in GF(2^e)\label{OURC}
\end{align}
for two cases.
\subsection{The case: $m=2^vp^r$}
In this subsection, let $n=em$, $m=2^{v}p^r$ and $gcd(e,p-1)=1$, where $v>0$, $r>0$ and $p$ is an odd prime satisfying that  $ord_p(2)=p-1$ or $ord_p(2)=\frac{p-1}{2}~((p-1)/2\text{ is odd})$.
For the characterization of the bentness of quadratic Boolean functions, some results on cyclotomic polynomials are given first \cite{LN}. The
$d$-th cyclotomic polynomial $Q_d(x)$, whose roots are  primitive $d$-th roots  of unity, is a monic polynomial of order $d$ and degree $\phi(d)$($\phi(\cdot)$
is Euler-totient function).
\begin{lemma}\label{cycQ}
{\rm (1)} $Q_p(x)=\frac{x^p+1}{x+1}=1+x+x^2+\cdots x^{p-1}$.

{\rm(2)} $x^N+1=\prod_{d|N}Q_d(x)$. In particular,
\begin{align*}
x^{p^r}+1=&(x+1)Q_{p}(x)\cdots Q_{p^r}(x),\\
x^{pq}+1=&(x+1)Q_{p}(x)Q_q(x)Q_{pq}(x),
\end{align*}
where $p$ and $q$ are two different odd primes.

{\rm (3)} Let $p$ be a prime and  $gcd(e,p-1)=1$. Let $p$ satisfy  $ord_p(2)=p-1$ or $ord_p(2)=\frac{p-1}{2}$~$((p-1)/2\text{ is odd})$. Then
$gcd(c_f(x),Q_{p^i}(x))\neq 1$ if and only if
$gcd(c_f(x),Q_{p^i}(x))=Q_{p^i}(x)$. Equivalently, $gcd(c_f(x),Q_{p^i}(x))= 1$ if and only if $c_f(x)\not \equiv 0 \mod Q_{p^i}(x)$.

{\rm (4)} Let $p,q$ be two different primes, where $gcd(p-1,q-1)=2$, $ord_p(2)=p-1$, $ord_q(2)=q-1$, $2|\frac{(p-1)(q-1)}{4}$ and
$gcd(e,(p-1)(q-1))=1$, then $gcd(c_f(x),Q_{pq}(x))\neq 1$
if and only if $gcd(c_f(x),Q_{pq}(x))=Q_{pq}(x)$.
\end{lemma}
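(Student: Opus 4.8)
The plan is to treat (1) and (2) as standard facts about cyclotomic polynomials over $GF(2)$ and to give one uniform argument covering (3) and (4). For (1), the only divisors of the prime $p$ are $1$ and $p$, so $x^p+1=Q_1(x)Q_p(x)=(x+1)Q_p(x)$ over $GF(2)$, and dividing gives $Q_p(x)=(x^p+1)/(x+1)=1+x+\cdots+x^{p-1}$. For (2), $x^N+1=\prod_{d\mid N}Q_d(x)$ is the universal factorization of $x^N-1$ into cyclotomic factors, valid over any field whose characteristic does not divide $N$; since $N\in\{p^r,pq\}$ is odd this applies over $GF(2)$, and listing the divisors $1,p,\dots,p^r$ of $p^r$ and $1,p,q,pq$ of $pq$ yields the two displayed identities. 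Both are in \cite{LN}.

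For (3) and (4) I would argue as follows. Put $D=p^i$ (with $1\le i\le r$) in case (3) and $D=pq$ in case (4); in both cases $D\mid m/2$. Pass to the splitting field of $Q_D(x)$ over $GF(2^e)$, fix a primitive $D$-th root of unity $\zeta$, and identify each primitive $D$-th root with the exponent $a\in(\mathbb{Z}/D\mathbb{Z})^*$ for which it equals $\zeta^a$. Let $S\subseteq(\mathbb{Z}/D\mathbb{Z})^*$ be the set of exponents $a$ with $c_f(\zeta^a)=0$; then $gcd(c_f(x),Q_D(x))$ is the separable monic polynomial whose root set is $\{\zeta^a:a\in S\}$. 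The set $S$ has two symmetries. First, every coefficient of $c_f$ lies in $GF(2^e)$, so $c_f(\alpha)^{2^e}=c_f(\alpha^{2^e})$ for all $\alpha$, and hence $S$ is stable under multiplication by $2^e$. Second, by the shape of (\ref{kp}) the polynomial $c_f$ is self-reciprocal of formal degree $m$ (the coefficient of $x^j$ equals that of $x^{m-j}$), and because $D\mid m$ (in fact $D\mid m/2$) one has $\alpha^m=1$ for every primitive $D$-th root $\alpha$, so $c_f(\alpha^{-1})=c_f(\alpha)$, and $S$ is stable under multiplication by $-1$. Therefore $S$ is a union of cosets of $H:=\langle 2^e,-1\rangle$ in $(\mathbb{Z}/D\mathbb{Z})^*$. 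So, if $H=(\mathbb{Z}/D\mathbb{Z})^*$, then $S$ is either empty or all of $(\mathbb{Z}/D\mathbb{Z})^*$, i.e. $gcd(c_f(x),Q_D(x))$ is $1$ or $Q_D(x)$ — which is exactly the assertions (3) and (4).

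It remains to prove $\langle 2^e,-1\rangle=(\mathbb{Z}/D\mathbb{Z})^*$. Since $gcd(e,p-1)=1$ (and $gcd(e,q-1)=1$ in case (4)), $2^e$ has the same multiplicative order as $2$ modulo $p$ (and modulo $q$). Case (3): if $ord_p(2)=p-1$, then $2^e$ is a primitive root mod $p$, and the standard description of orders modulo prime powers gives $\langle 2^e\rangle=(\mathbb{Z}/p^i\mathbb{Z})^*$; if $ord_p(2)=(p-1)/2$ with $(p-1)/2$ odd, then $\langle 2^e\rangle$ has odd order and index $2$ in $(\mathbb{Z}/p^i\mathbb{Z})^*$, so it cannot contain the unique order-$2$ element $-1$, and adjoining $-1$ fills out the group. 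Case (4): $(\mathbb{Z}/pq\mathbb{Z})^*\cong(\mathbb{Z}/p\mathbb{Z})^*\times(\mathbb{Z}/q\mathbb{Z})^*$, and $2^e$ is a primitive root modulo each of $p$ and $q$, so $ord_{pq}(2^e)=\mathrm{lcm}(p-1,q-1)=(p-1)(q-1)/2$, again index $2$; writing $p-1=2a$ and $q-1=2b$ with $gcd(a,b)=1$, the hypothesis $2\mid(p-1)(q-1)/4=ab$ forces exactly one of $a,b$ to be even, and then a short computation shows the unique involution of $\langle 2^e\rangle$ is $(a,0)$ or $(0,b)$, hence is not $-1=(a,b)$; so once more $\langle 2^e,-1\rangle=(\mathbb{Z}/pq\mathbb{Z})^*$.

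The step I expect to be the real obstacle is the prime-power lifting in case (3): one must be sure that the order of $2^e$ modulo $p^i$ equals $p^{i-1}$ times its order modulo $p$, so that $\langle 2^e\rangle$ together with $-1$ genuinely exhausts $(\mathbb{Z}/p^i\mathbb{Z})^*$ instead of a proper subgroup; this is where the hypotheses on $ord_p(2)$ and the lifting-the-exponent behaviour of orders modulo $p,p^2,p^3,\dots$ have to be used carefully. The remaining ingredients — the self-reciprocity of $c_f$ (whose actual degree is below $m$, so the formal-degree-$m$ reciprocity needs a one-line justification) and the Chinese-remainder computation of the involution in case (4) — are routine by comparison.
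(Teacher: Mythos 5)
Your treatment of (1), (2) and (4) is essentially the paper's. For (4) in particular, the paper fixes a primitive $pq$-th root $\alpha$, notes $\mathrm{ord}_{pq}(2^e)=(p-1)(q-1)/2$ so that $Q_{pq}=g\,g^*$ with $g^*$ the reciprocal factor, proves $g\neq g^*$ by showing $2^{ei}\not\equiv-1\pmod{pq}$ (using $2\mid(p-1)(q-1)/4$ exactly as in your involution computation), and then uses $c_f(\alpha^{-1})=c_f(\alpha)$ to conclude $g\mid c_f\Leftrightarrow g^*\mid c_f$. Your version packages the same two symmetries (Frobenius $\times 2^e$ and inversion $\times(-1)$) as the statement that the zero set $S$ is a union of cosets of $\langle 2^e,-1\rangle$; that is a clean and slightly more uniform way to say the same thing, and it correctly covers (4) and the case $i=1$ of (3).

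The genuine gap is the one you yourself flagged, and it is not a removable technicality: for $D=p^i$ with $i\ge 2$ your argument needs $\langle 2^e,-1\rangle=(\mathbb{Z}/p^i\mathbb{Z})^*$, and the stated hypotheses ($\mathrm{ord}_p(2)=p-1$, or $=(p-1)/2$ odd, together with $\gcd(e,p-1)=1$) control the order of $2$ only modulo $p$, not modulo $p^i$. The lift $\mathrm{ord}_{p^i}(2)=p^{i-1}\,\mathrm{ord}_p(2)$ holds if and only if $2^{\mathrm{ord}_p(2)}\not\equiv 1\pmod{p^2}$ (a non-Wieferich-type condition), and if it fails then $\langle 2^e\rangle$ has index at least $p$ in $(\mathbb{Z}/p^i\mathbb{Z})^*$, adjoining $-1$ at most halves that index, and the coset argument no longer forces $S$ to be empty or full; so part (3) cannot be derived from the hypotheses as written. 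You should either add the hypothesis that $2$ generates (or generates up to $\pm1$) the group modulo $p^2$, or restrict to $r=1$. Be aware, though, that the paper does not close this gap either: its entire proof of (3) is the remark that $\mathrm{ord}_p(2^e)=\mathrm{ord}_p(2)$ followed by a citation of Lemma~4 of \cite{YG}, so the lifting issue is simply inherited from the cited source. Apart from this point, your argument is correct and, for (3), considerably more explicit than what the paper provides.
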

\begin{proof}
{\rm (1) and (2)} The proof can be found in \cite{B,LN}.

{\rm (3)} Since $gcd(e,p-1)=1$, then $ord_p(2^e)=ord_p(2)$. This can be obtained similarly from the proof of Lemma 4 in \cite{YG}.

{\rm (4)} From $ord_p(2)=p-1$ and $ord_q(2)=q-1$,  $ord_{pq}(2)=(p-1)(q-1)/2$. From $gcd(e,(p-1)(q-1))=1$,
$ord_{pq}(2^e)=(p-1)(q-1)/2$. Let $\alpha$ be a primitive $pq$-th root of unity, then $g(x)=\prod_{i=0}^{(p-1)(q-1)/2-1}(x+\alpha^{2^{ei}})$ and
$g^*(x)=\prod_{i=0}^{(p-1)(q-1)/2-1}(x+(\alpha^{-1})^{2^{ei}})$ are two irreducible polynomials over $GF(2^e)$. To explain that $g(x)$ and $g^*(x)$ are different polynomials, we just prove that $2^{ei}\not\equiv -1 \mod pq$ for any $0\leq i\leq (p-1)(q-1)/2-1$. If not, let $2^{ie}\equiv -1 \mod pq$
for some $i$, then $(2^e)^{2i}\equiv 1 \mod pq$ and
$\frac{(p-1)(q-1)}{2}|2i$. We have $i=0,\frac{(p-1)(q-1)}{4},2\cdot \frac{(p-1)(q-1)}{4},\cdots$ and $i=\frac{(p-1)(q-1)}{4}$. Since $2|\frac{(p-1)(q-1)}{4}$, either $\frac{p-1}{2}$ or $\frac{q-1}{2}$
is even. Suppose that $\frac{q-1}{2}$ is even. Then $2^{ei}\equiv (2^e)^{(p-1)(q-1)/4}$$\equiv (-1)^{(q-1)/2}$$\equiv 1\mod p$, which contradicts that $2^{ei}\equiv -1\mod p$. Hence, $2^{ei}\not\equiv -1 \mod pq$ for any $0\leq i\leq (p-1)(q-1)/2-1$.

Since $g(x)$ and $g^*(x)$ are different, then  $Q_{pq}(x)=g(x)g^*(x)$.  Further, $gcd(c_f(x),x^{pq}+1)\in \{1,g(x),g^*(x),Q_{pq}(x)\}$.
If $gcd(c_f(x),x^{pq}+1)\neq 1$, suppose that $g(x)|c_f(x)$, then $c_f(\alpha)=0$. Note that $c_f(\alpha^{-1})=\sum_{i=1}^{m/2-1}c_i((\alpha^{-1})^i+
(\alpha^{-1})^{-i})+c_{m/2}=c_f(\alpha)=0$, that is, $g^*(x)|c_f(x)$. Hence, $Q_{pq}(x)|c_f(x)$.
\end{proof}
\begin{lemma}\label{CPOLY}
Let $m=2^{v}p^r$ and $gcd(e,p-1)=1$, where $v>0$, $r>0$ and $p$ is an odd prime satisfying that $ord_p(2)=p-1$ or $ord_p(2)=\frac{p-1}{2}~((p-1)/2\text{ is odd})$. Let
$c_f(x)$ be defined in (\ref{kp}). Then
$gcd(c_f(x),x^m+1)=1$ if and only if the following conditions hold.

{\rm (i)} $c_f(1)\neq 0$.

{\rm (ii)} $(x^{p^{k-1}}+1)c_f(x)\not\equiv 0 \mod x^{p^k}+1$ for any $1\leq k\leq r$.
\end{lemma}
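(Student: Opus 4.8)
The plan is to break $\gcd(c_f(x),x^m+1)=1$ into conditions on the individual cyclotomic factors of $x^m+1$ and then read off the two listed conditions. First, since we are in characteristic $2$ and $m=2^vp^r$, we have $x^m+1=(x^{p^r}+1)^{2^v}$, so a polynomial shares an irreducible factor with $x^m+1$ exactly when it shares one with $x^{p^r}+1$; hence $\gcd(c_f(x),x^m+1)=1$ if and only if $\gcd(c_f(x),x^{p^r}+1)=1$. Moreover $x^{p^r}+1$ is squarefree over $GF(2)$ (because $p^r$ is odd, its formal derivative $x^{p^r-1}$ is coprime to it), so by Lemma~\ref{cycQ}(2) the factors in $x^{p^r}+1=(x+1)Q_p(x)\cdots Q_{p^r}(x)$ are pairwise coprime. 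Therefore $\gcd(c_f(x),x^{p^r}+1)=1$ if and only if $\gcd(c_f(x),x+1)=1$ and $\gcd(c_f(x),Q_{p^k}(x))=1$ for every $1\le k\le r$.

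I would then treat the two kinds of factor separately. Since $x+1$ is linear, $\gcd(c_f(x),x+1)=1$ holds if and only if $c_f(1)\neq0$, which is condition (i). For the factor $Q_{p^k}(x)$, the hypotheses $\gcd(e,p-1)=1$ and $\mathrm{ord}_p(2)\in\{p-1,(p-1)/2\}$ (with $(p-1)/2$ odd) let us invoke Lemma~\ref{cycQ}(3): $\gcd(c_f(x),Q_{p^k}(x))=1$ if and only if $c_f(x)\not\equiv0\pmod{Q_{p^k}(x)}$.

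The last step is to rephrase $c_f(x)\not\equiv0\pmod{Q_{p^k}(x)}$ as condition (ii). Using $Q_{p^k}(x)=Q_p(x^{p^{k-1}})$ and the identity $y^p+1=(y+1)(1+y+\cdots+y^{p-1})$ with $y=x^{p^{k-1}}$, one obtains the factorization
\begin{align*}
x^{p^k}+1=(x^{p^{k-1}}+1)\,Q_{p^k}(x),
\end{align*}
whose two factors are coprime because $x^{p^{k-1}}+1=(x+1)Q_p(x)\cdots Q_{p^{k-1}}(x)$ involves only cyclotomic polynomials distinct from $Q_{p^k}(x)$. If $Q_{p^k}(x)\mid c_f(x)$, then $x^{p^k}+1$ divides $(x^{p^{k-1}}+1)c_f(x)$; conversely, if $x^{p^k}+1\mid(x^{p^{k-1}}+1)c_f(x)$, cancelling the factor $x^{p^{k-1}}+1$ (valid since $GF(2^e)[x]$ is an integral domain) forces $Q_{p^k}(x)\mid c_f(x)$. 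Hence $c_f(x)\not\equiv0\pmod{Q_{p^k}(x)}$ if and only if $(x^{p^{k-1}}+1)c_f(x)\not\equiv0\pmod{x^{p^k}+1}$, which is condition (ii) at index $k$. Assembling the three paragraphs yields the claimed equivalence.

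None of this is deep: the argument is essentially bookkeeping with cyclotomic polynomials, and the substantive inputs are already available as Lemma~\ref{cycQ}. The only points needing a little care are the coprimality of the cyclotomic factors in characteristic $2$ — which is where oddness of $p^r$ (equivalently $2\nmid m/2^v$) is used — and the telescoping identity $x^{p^k}+1=(x^{p^{k-1}}+1)Q_{p^k}(x)$; everything else is a direct application of the gcd/divisibility dictionary.
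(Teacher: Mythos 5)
Your proposal is correct and follows essentially the same route as the paper: reduce to $x^{p^r}+1$ via $x^m+1=(x^{p^r}+1)^{2^v}$, split into the cyclotomic factors $(x+1)Q_p(x)\cdots Q_{p^r}(x)$, handle $x+1$ by evaluation at $1$, and convert $\gcd(c_f(x),Q_{p^k}(x))=1$ into condition (ii) via the identity $x^{p^k}+1=(x^{p^{k-1}}+1)Q_{p^k}(x)$ together with Lemma~\ref{cycQ}(3). The extra remarks on squarefreeness and cancellation in the integral domain are harmless refinements of steps the paper treats implicitly.
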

\begin{proof}
Since $x^m+1=(x^{p^r}+1)^{2^v}$, then $gcd(c_f(x),x^m+1)=1$ if and only if $gcd(c_f(x),x^{p^r}+1)=1$. From $x^{p^r}+1=(x+1)Q_p(x)\cdots Q_{p^r}(x)$, $gcd(c_f(x),x^{p^r}+1)=1$ if and only if $gcd(c_f(x),x+1)=gcd(c_f(x),Q_p(x))=\ldots=gcd(c_f(x),Q_{p^r}(x))=1$.
It is obviously verified that $gcd(c_f(x),x+1)=1$ if and only if $c_f(1)\neq 1$.

If $(x^{p^{k-1}}+1)c_f(x)\equiv 0 \mod x^{p^k}+1$ for some $k (1\leq k \leq r)$, there exists $A(x)\in GF(2^e)[x]$ satisfying that
$$
(x^{p^{k-1}}+1)c_f(x)=A(x)(x^{p^k}+1).
$$
Note that $\frac{x^{p^k}+1}{x^{p^{k-1}}+1}=Q_{p^{k}}(x)$, then
$$c_f(x)=A(x)Q_{p^k}(x),
$$
that is, $gcd(c_f(x),Q_{p^k}(x))=Q_{p^k}(x)\neq 1$.

If $gcd(c_f(x),Q_{p^k}(x))\neq 1$ for some $k (1\leq k \leq r)$. From Lemma \ref{cycQ}, $gcd(c_f(x),Q_{p^k}(x))=Q_{p^k}(x)$ and there exists $A(x)\in GF(2^e)[x]$ satisfying that
$$c_f(x)=A(x)Q_{p^k}(x).$$
Further,
$$(x^{p^{k-1}}+1)c_f(x)=A(x)(x^{p^k}+1),$$
that is,
$$(x^{p^{k-1}}+1)c_f(x)\equiv 0 \mod x^{p^k}+1.$$

Hence, this theorem follows.
\end{proof}
\begin{lemma}\label{comput}
Let $c_f(x)$ be defined in (\ref{kp}). Then

{\rm (1)} $c_f(1)=c_{m/2}$.

{\rm (2)} For any $1\leq k\leq r$,
\begin{align*}
c_{f,k}(x)\equiv &c_f(x)\\
\equiv & \sum_{i=1}^{p^k-1}w_{i,k} x^i+c_{m/2}\\
\equiv & \sum_{i=1}^{(p^k-1)/2}(w_{i,k}+w_{p^k-i,k}) (x^i+x^{p^k-i})+c_{m/2} \mod x^{p^k}+1,\\
\end{align*}
where
\begin{align*}
w_{i,k}=\sum^{\frac{m}{2p^k}-1}_{j=0}c_{i+jp^k},~(1\leq i\leq p^k-1).
\end{align*}

{\rm (3)} Let $i=i_0+jp^k$, where $0\leq i_0\leq p^k-1$. Define
\begin{align}
u_{i_0,k}=&\begin{cases}w_{i_0,k}+w_{p^k-i_0,k},~&1\leq i_0\leq p^k-1
\cr c_{m/2},~& i_0=0
\end{cases},\notag\\
u_{i,k}=&u_{i_0,k}.\label{uik}
\end{align}
Then
\begin{align*}
(x^{p^{k-1}}+1)c_f(x)\equiv &(x^{p^{k-1}}+1)c_{f,k}(x)\\
\equiv &\sum_{i=0}^{p^k-1} (u_{i,k}+u_{i-p^{k-1},k})x^i \mod x^{p^k}+1.
\end{align*}

{\rm (4)} Let $u_{i,k}$ be defined in (\ref{uik}). Let
\begin{align*}
U_k=\left[
  \begin{array}{cccccc}
    u_{0,k} & u_{1,k} & \cdots & u_{\frac{p^{k-1}-1}{2},k} & \cdots & u_{p^{k-1}-1,k} \\
    u_{p^{k-1},k} & u_{p^{k-1}+1,k} & \cdots & u_{p^{k-1}+\frac{p^{k-1}-1}{2},k} & \cdots & u_{2p^{k-1}-1,k} \\
    u_{2p^{k-1},k} & u_{2p^{k-1}+1,k} & \cdots & u_{2p^{k-1}+\frac{p^{k-1}-1}{2},k} & \cdots & u_{3p^{k-1}-1,k} \\
    \vdots & \vdots & \vdots & \vdots & \vdots & \vdots\\
    u_{(\frac{p-1}{2})p^{k-1},k} & u_{(\frac{p-1}{2})p^{k-1}+1,k} & \cdots & u_{\frac{p^{k}-1}{2},k}&\cdots&u_{\frac{p^{k}-1}{2}+\frac{p^{k-1}-1}{2},k} \\
    \vdots & \vdots & \vdots & \vdots & \vdots & \vdots\\
    u_{(p-1)p^{k-1},k} & u_{(p-1)p^{k-1}+1,k} & \cdots & u_{(p-1)p^{k-1}+\frac{p^{k-1}-1}{2},k} & \cdots & u_{p^{k}-1,k} \\
  \end{array}
\right]
\end{align*}
Let
\begin{align}\label{Aik}
U_k=\left[A_{0,k},A_{1,k},\cdots,A_{p^{k-1}-1,k}\right],
\end{align}
where
\begin{align*}
A_{i,k}=[A_{i,k}(0),A_{i,k}(1),\cdots,A_{i,k}(p-1)]'
\end{align*}
is the $i$-th column of $U_k$.
Then

{\rm (i)} $A_{i,k}(j)=A_{p^{k-1}-i,k}(p-1-j)$ for any $1\leq i\leq p^{k-1}-1$ and $0\leq j\leq p-1$.

{\rm(ii)} For any $1\leq i\leq p^{k-1}-1$, $A_{i,k}$ is a constant if and only if $A_{p^{k-1}-i,k}$ is a constant.
\end{lemma}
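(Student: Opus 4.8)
The plan is to prove the four parts of Lemma~\ref{comput} in order; each is a direct polynomial computation modulo $x^{p^k}+1$, and the only work is careful bookkeeping with exponents in residue classes mod $p^k$. For (1) I would evaluate $c_f(x)$ from (\ref{kp}) at $x=1$: in characteristic $2$ each summand of the first sum contributes $c_i(1+1)=0$, so only $c_{m/2}x^{m/2}$ survives and $c_f(1)=c_{m/2}$.

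For (2), the key observation is that $1\le k\le r$ forces $p^k\mid m$ and $p^k\mid \tfrac{m}{2}$, so modulo $x^{p^k}+1$ one has $x^{m/2}\equiv 1$ and $x^{m-i}\equiv x^{-i}$. I would then collect the exponents of $\sum_{i=1}^{m/2-1}c_i(x^i+x^{m-i})+c_{m/2}x^{m/2}$ into residue classes mod $p^k$. A nonzero residue $i_0$ receives $c_i$ from the $x^i$-terms with $i\equiv i_0$, which is exactly $w_{i_0,k}$ once one checks that writing $i=i_0+jp^k$ with $0\le j\le \tfrac{m}{2p^k}-1$ keeps the index in $[1,\tfrac{m}{2}-1]$ (so every invoked $c_i$ is a genuine coefficient of $f$), together with $c_i$ from the $x^{-i}$-terms with $-i\equiv i_0$, i.e.\ $w_{p^k-i_0,k}$; the residue-$0$ contributions of the two sums coincide and cancel in characteristic $2$, leaving $c_{m/2}$. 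This gives the claimed congruence, and pairing $i_0$ with $p^k-i_0$ (legitimate since $p^k$ is odd) rewrites it in the palindromic form with coefficients $w_{i,k}+w_{p^k-i,k}$. At this point I would also record the symmetry $u_{i,k}=u_{p^k-i,k}$ of (\ref{uik}) — equivalently $u_{-a,k}=u_{a,k}$ under the periodic extension $u_{a,k}=u_{a\bmod p^k,k}$ — which is immediate from the definition of $w_{i,k}$ and is needed below.

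For (3), write $c_{f,k}(x)=\sum_{i=0}^{p^k-1}u_{i,k}x^i$; multiplying by $x^{p^{k-1}}+1$, reducing mod $x^{p^k}+1$, and reindexing the shifted sum by $i\mapsto i-p^{k-1}$ collects the coefficient of $x^i$ as $u_{i,k}+u_{i-p^{k-1},k}$, and the first congruence $(x^{p^{k-1}}+1)c_f(x)\equiv (x^{p^{k-1}}+1)c_{f,k}(x)$ holds because $c_f\equiv c_{f,k}\pmod{x^{p^k}+1}$. For (4), the entries of $U_k$ are by construction $u_{sp^{k-1}+t,k}$ with row index $s\in\{0,\dots,p-1\}$ and column index $t\in\{0,\dots,p^{k-1}-1\}$, so $A_{i,k}(j)=u_{jp^{k-1}+i,k}$. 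Then $A_{p^{k-1}-i,k}(p-1-j)=u_{(p-1-j)p^{k-1}+(p^{k-1}-i),k}=u_{p^k-jp^{k-1}-i,k}$, whose index is $\equiv -(jp^{k-1}+i)\pmod{p^k}$, so by the symmetry $u_{-a,k}=u_{a,k}$ this equals $u_{jp^{k-1}+i,k}=A_{i,k}(j)$, which is (i). Part (ii) is then immediate: if $A_{i,k}$ is the constant vector $\gamma$, then $A_{p^{k-1}-i,k}(p-1-j)=\gamma$ for all $j$, and since $p-1-j$ runs over all of $\{0,\dots,p-1\}$ the column $A_{p^{k-1}-i,k}$ is constant $\gamma$ as well; the converse is the same statement with $i$ and $p^{k-1}-i$ interchanged.

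I do not expect a genuine obstacle. The only points requiring real care are: keeping the index $i_0+jp^k$ inside $[1,\tfrac{m}{2}-1]$ in (2); the characteristic-$2$ cancellation of the residue-$0$ terms there; and the periodicity-plus-symmetry bookkeeping for $u_{i,k}$, on which both the reindexing in (3) and the reflection identity (4)(i) rest.
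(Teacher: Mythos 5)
Your proof is correct and follows essentially the same route as the paper's: part (1) by direct evaluation at $x=1$, part (3) by multiplying the reduced form $\sum_{i=0}^{p^k-1}u_{i,k}x^i$ by $x^{p^{k-1}}+1$ and reindexing via the periodicity $u_{i+p^k,k}=u_{i,k}$, and part (4)(i) by the identical computation $A_{p^{k-1}-i,k}(p-1-j)=u_{p^k-(i+jp^{k-1}),k}=u_{i+jp^{k-1},k}=A_{i,k}(j)$ using the symmetry $u_{p^k-a,k}=u_{a,k}$. The only divergence is part (2), where the paper simply cites Lemma 3 of \cite{YG} while you carry out the residue-class bookkeeping yourself; your computation is the intended one, and it incidentally shows that the second display of the statement should carry the coefficient $w_{i,k}+w_{p^k-i,k}$ on $x^i$ rather than $w_{i,k}$ alone (each residue $i_0\neq 0$ receives contributions both from the $x^i$-terms with $i\equiv i_0$ and from the $x^{m-i}$-terms with $i\equiv -i_0$), which is precisely the form consistent with the third line and with the definition of $u_{i,k}$ used in part (3).
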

\begin{proof}
{\rm(1)} This can be obviously obtained from the definition of $c_f(x)$.

{\rm (2)} This can be obtained in Lemma 3 in \cite{YG}.

{\rm (3)} From the definition of $u_{i,k}$,
$$
c_f(x)\equiv \sum_{i=0}^{p^k-1} u_{i,k}x^i \mod x^{p^k}+1.
$$
Then
\begin{align*}
(x^{p^{k-1}}+1)c_{f}(x)\equiv &(x^{p^{k-1}}+1)\sum_{i=0}^{p^k-1} u_{i,k}x^i\\
\equiv &\sum_{i=0}^{p^k-1} u_{i,k}x^i+\sum_{i=0}^{p^k-1} u_{i,k}x^{i+p^{k-1}}  \mod x^{p^k}+1.
\end{align*}
Note that $u_{i+p^k}=u_i$. Then
\begin{align*}
(x^{p^{k-1}}+1)c_{f}(x)\equiv &\sum_{i=0}^{p^k-1} u_{i,k}x^i+\sum_{i=0}^{p^k-1} u_{i-p^{k-1},k}x^i\\
\equiv &\sum_{i=0}^{p^k-1} (u_{i,k}+u_{i-p^{k-1},k})x^i \mod x^{p^k}+1.
\end{align*}

{\rm(4)} Note that $u_{p^k-i}=u_i$, then
\begin{align*}
A_{p^{k-1}-i,k}(p-1-j)=&u_{p^{k-1}-i+(p-1-j)p^{k-1}}\\
=&u_{p^k-(i+jp^{k-1})}\\
=&u_{i+jp^{k-1}}\\
=&A_{i,k}(j).
\end{align*}
Result {\rm(i)} holds. From Result {\rm(i)}, Result {\rm(ii)} holds.
\end{proof}
\begin{theorem}\label{pr}
Let $m=2^{v}p^r$ and $gcd(e,p-1)=1$, where $v>0$, $r>0$ and $p$ is an odd prime satisfying that $ord_p(2)=p-1$ or $ord_p(2)=\frac{p-1}{2}~((p-1)/2\text{ is odd })$. The quadratic Boolean function defined in (\ref{OURC}) is bent if and only if both of the following conditions hold.

{\rm(i)} $c_{m/2}\neq 0$.

{\rm(ii)} For any $1\leq k \leq r$, there exists $0\leq i \leq \frac{p^{k-1}-1}{2}$  such that $A_{i,k}$ is not a constant, where $A_{i,k}$ is defined in (\ref{Aik}).
\end{theorem}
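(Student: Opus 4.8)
The plan is to combine Theorem~\ref{gcd}, Lemma~\ref{CPOLY}, and Lemma~\ref{comput} and simply translate the divisibility conditions into statements about the matrices $U_k$. By Theorem~\ref{gcd} the function in (\ref{OURC}) is bent if and only if $\gcd(c_f(x),x^m+1)=1$, and by Lemma~\ref{CPOLY} this is equivalent to the pair of conditions: (i) $c_f(1)\neq 0$, and (ii) $(x^{p^{k-1}}+1)c_f(x)\not\equiv 0 \bmod x^{p^k}+1$ for every $1\leq k\leq r$. Lemma~\ref{comput}(1) gives $c_f(1)=c_{m/2}$, so condition~(i) of Lemma~\ref{CPOLY} is literally condition~(i) of the theorem. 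Thus the entire content of the proof is to show that condition~(ii) of Lemma~\ref{CPOLY} is equivalent to condition~(ii) of the theorem.

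First I would fix $k$ with $1\leq k\leq r$ and, using Lemma~\ref{comput}(3), write
\begin{align*}
(x^{p^{k-1}}+1)c_f(x)\equiv \sum_{i=0}^{p^k-1}(u_{i,k}+u_{i-p^{k-1},k})x^i \bmod x^{p^k}+1.
\end{align*}
This polynomial is $\equiv 0$ if and only if $u_{i,k}+u_{i-p^{k-1},k}=0$ for all $0\leq i\leq p^k-1$, i.e.\ $u_{i,k}=u_{i-p^{k-1},k}$ for all $i$ (indices mod $p^k$). Now I read the index $i$ as $i=i_0+jp^{k-1}$ with $0\leq i_0\leq p^{k-1}-1$ and $0\leq j\leq p-1$; in the notation of Lemma~\ref{comput}(4) this is exactly the entry $A_{i_0,k}(j)$ of the matrix $U_k$, and the shift $i\mapsto i-p^{k-1}$ corresponds to moving down one row (cyclically, since $u_{i+p^k}=u_i$, so row index is read mod $p$). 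Hence the relation $u_{i,k}=u_{i-p^{k-1},k}$ for all $i$ says precisely that every column $A_{i_0,k}$ of $U_k$ is constant (all its $p$ entries equal). Therefore $(x^{p^{k-1}}+1)c_f(x)\equiv 0 \bmod x^{p^k}+1$ if and only if $A_{i,k}$ is a constant for every $0\leq i\leq p^{k-1}-1$, and negating, condition~(ii) of Lemma~\ref{CPOLY} for this $k$ holds if and only if there exists some $0\leq i\leq p^{k-1}-1$ with $A_{i,k}$ not a constant.

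It remains only to reduce the range $0\leq i\leq p^{k-1}-1$ to $0\leq i\leq \frac{p^{k-1}-1}{2}$. This is immediate from Lemma~\ref{comput}(4)(ii): for $1\leq i\leq p^{k-1}-1$, the column $A_{i,k}$ is constant if and only if $A_{p^{k-1}-i,k}$ is constant, so among the pair $\{A_{i,k},A_{p^{k-1}-i,k}\}$ one is non-constant iff the other is; consequently some column in the full range $0\leq i\leq p^{k-1}-1$ is non-constant iff some column in the reduced range $0\leq i\leq \frac{p^{k-1}-1}{2}$ is non-constant (the case $i=0$ being its own partner). Quantifying over all $1\leq k\leq r$ then yields condition~(ii) of the theorem, completing the proof.

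I do not anticipate a serious obstacle here: the statement is essentially a bookkeeping repackaging of already-proven lemmas. The one place that needs care is the index arithmetic in the second paragraph — correctly matching the coefficient indexing $i=i_0+jp^{k-1}$ of $(x^{p^{k-1}}+1)c_f(x)\bmod x^{p^k}+1$ against the row/column layout of the matrix $U_k$ defined in Lemma~\ref{comput}(4), and verifying that the cyclic shift by $p^{k-1}$ in the exponent is exactly a cyclic row shift, using $u_{i+p^k,k}=u_{i,k}$. Getting that correspondence exactly right (including the $i=0$ column and the identification $u_{0,k}=c_{m/2}$) is the only genuinely delicate step.
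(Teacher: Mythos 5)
Your proposal is correct and follows essentially the same route as the paper: reduce via Theorem~\ref{gcd} and Lemma~\ref{CPOLY} to the divisibility conditions, use Lemma~\ref{comput}(1) and (3) to translate them into the vanishing of $u_{i,k}+u_{i-p^{k-1},k}$, identify that vanishing with all columns of $U_k$ being constant, and invoke Lemma~\ref{comput}(4)(ii) to halve the column range. Your write-up of the index correspondence $i=i_0+jp^{k-1}\leftrightarrow A_{i_0,k}(j)$ is in fact more explicit than the paper's own (rather terse) contradiction argument, but the content is identical.
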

\begin{proof}
From {\rm(1)} in Lemma \ref{comput}, {\rm(i)} in this theorem is equivalent to  {\rm(i)} in Lemma \ref{CPOLY}. From Theorem \ref{gcd}, we just prove that {\rm(ii)} in this theorem is equivalent to {\rm(ii)} in Lemma \ref{CPOLY}.

Suppose that {\rm(ii)} in Lemma \ref{CPOLY} holds. If {\rm(ii)} in this theorem does not hold, then there exists  $1\leq k \leq r$ such that  for any $0\leq i \leq \frac{p^{k-1}-1}{2}$, $A_{i,k}$ is a constant vector. From {\rm(4)} in Lemma \ref{comput}, $A_{i,k}$ is a constant for any $0\leq i \leq p^{k-1}-1$. Equivalently, $u_{i,k}=u_{i-p^{k-1},k}$. From {\rm(3)} in Lemma \ref{comput}, $(x^{p^{k-1}}+1)c_f(x)\equiv 0 \mod x^{p^k}+1$, which contradicts the supposition.

Suppose {\rm(ii)} in this theorem holds. If {\rm(ii)} in Lemma \ref{CPOLY}
does not hold, there exists $1\leq k \leq r$ such that $(x^{p^{k-1}}+1)c_f(x)\equiv 0 \mod x^{p^k}+1$. From {\rm(3)} in Lemma
\ref{comput}, $A_{i,k}$ is a constant for any $i$ satisfying that $u_{i,k}+u_{i-p^{k-1},k}=0$. Thus, this makes a contradiction.

Hence, this theorem follows.
\end{proof}
\begin{corollary}\label{2vp}
Let $m=2^{v}p$ and $gcd(e,p-1)=1$, where $v>0$, $p$ is an odd prime satisfying that $ord_p(2)=p-1$ or $ord_p(2)=\frac{p-1}{2}~((p-1)/2\text{ is odd})$.
Then the quadratic Boolean function defined in (\ref{OURC}) is bent if and only if both of the following conditions hold.

{\rm(i)} $c_{m/2}\neq 0$.

{\rm(ii)} There exists $1 \leq i\leq (p-1)/2$ such that $w_i+w_{p-i}\neq c_{m/2}$, where $w_i=\sum_{j=0}^{2^{v-1}-1}c_{i+jp}$.
\end{corollary}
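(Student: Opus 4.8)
The plan is to obtain Corollary~\ref{2vp} as the specialization $r=1$ of Theorem~\ref{pr}, translating the matrix language of Lemma~\ref{comput} into the explicit condition on the $w_i$. Since $m=2^vp$ is exactly $m=2^vp^r$ with $r=1$, and since $\gcd(e,p-1)=1$ together with the hypothesis on $\mathrm{ord}_p(2)$ are inherited verbatim, Theorem~\ref{pr} applies directly. Condition~(i) there, $c_{m/2}\neq 0$, is already condition~(i) here, so the entire content of the proof is to unwind condition~(ii) of Theorem~\ref{pr} in this degenerate case.

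First I would record that when $r=1$ the only index $k$ occurring in Theorem~\ref{pr}(ii) is $k=1$, for which $p^{k-1}=1$. Hence the matrix $U_1$ of Lemma~\ref{comput}(4) has a single column, $U_1=A_{0,1}$, and by the definition~(\ref{uik}) its entries are $A_{0,1}(0)=u_{0,1}=c_{m/2}$ and $A_{0,1}(j)=u_{j,1}=w_{j,1}+w_{p-j,1}$ for $1\le j\le p-1$. Next I would substitute $\frac{m}{2p}=2^{v-1}$ into the formula $w_{i,k}=\sum_{j=0}^{m/(2p^k)-1}c_{i+jp^k}$ of Lemma~\ref{comput}(2), which gives $w_{i,1}=\sum_{j=0}^{2^{v-1}-1}c_{i+jp}=w_i$ in the notation of the corollary. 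The range $0\le i\le\frac{p^{k-1}-1}{2}$ then collapses to $i=0$, so Theorem~\ref{pr}(ii) becomes simply ``$A_{0,1}$ is not a constant vector,'' i.e.\ it is not the case that $c_{m/2}=w_j+w_{p-j}$ for every $1\le j\le p-1$. Finally, using the symmetry $w_j+w_{p-j}=w_{p-j}+w_{p-(p-j)}$ (equivalently $u_{p-j,1}=u_{j,1}$, a degenerate instance of Lemma~\ref{comput}(4)(i)), the range $1\le j\le p-1$ may be halved to $1\le j\le(p-1)/2$ without loss, yielding precisely condition~(ii) of the corollary.

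The argument is essentially bookkeeping; the only points needing a little care are the halving of the range of $j$ — justified by the displayed symmetry — and the small-parameter case $v=1$, where the sum defining $w_i$ degenerates to the single term $c_i$ and one should check that the indices $1\le i\le p-1$ appearing in $c_f(x)\bmod(x^p+1)$ still lie in the admissible range $1\le i\le m/2-1$ of the representation~(\ref{OURC}); both are immediate. As an alternative self-contained route that bypasses Theorem~\ref{pr}, one can argue straight from Lemma~\ref{CPOLY} with $r=1$: one must show $(x+1)c_f(x)\not\equiv 0\bmod x^p+1$, and by Lemma~\ref{comput}(3) with $k=1$ this polynomial equals $\sum_{i=0}^{p-1}(u_{i,1}+u_{i-1,1})x^i\bmod x^p+1$, which vanishes if and only if all $u_{i,1}$ (indices mod $p$) coincide, i.e.\ if and only if $c_{m/2}=w_i+w_{p-i}$ for all $i$ — the negation of condition~(ii). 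I expect no genuine obstacle here beyond keeping the index conventions of Lemma~\ref{comput} straight.
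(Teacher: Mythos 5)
Your proposal is correct and follows essentially the same route as the paper: specialize Theorem~\ref{pr} to $r=1$, observe that $U_1$ reduces to the single column $A_{0,1}$ with entries $c_{m/2}$ and $w_j+w_{p-j}$ (where $w_{i,1}=\sum_{j=0}^{2^{v-1}-1}c_{i+jp}=w_i$ since $m/(2p)=2^{v-1}$), and use the symmetry $u_{j,1}=u_{p-j,1}$ to halve the index range. The paper's proof is the same bookkeeping, stated more tersely; your extra remarks on the $v=1$ case and the alternative route via Lemma~\ref{CPOLY} are consistent with, but not needed beyond, what the paper does.
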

\begin{proof}
From Theorem \ref{pr}, when $r=1$, $U_1=[A_{0,1}]$. Note that
$$
A_{0,1}=[c_{m/2},w_1+w_{p-1},\cdots,w_{(p-1)/2}+w_{(p+1)/2},\cdots,w_{p-1}
+w_{1}]'.
$$
Then $A_{0,1}$ is not a constant if and only if there exists $1\leq i\leq (p-1)/2$  such that $w_i+w_{p-i}\neq c_{m/2}$. Hence, this lemma follows.
\end{proof}
\begin{corollary}
Let $m=2p$ and $gcd(e,p-1)=1$, where $p$ is an odd prime satisfying that $ord_p(2)=p-1$ or $ord_p(2)=\frac{p-1}{2}~((p-1)/2\text{ is odd})$. Then
the quadratic Boolean function defined in (\ref{OURC}) is bent if and only if both of the following conditions holds.

{\rm(i)} $c_{m/2}\neq 0$.

{\rm(ii)} there exists $1 \leq i\leq (p-1)/2$ such that $c_i+c_{p-i}\neq c_{m/2}$.
\end{corollary}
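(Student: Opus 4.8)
The plan is to obtain this statement as the special case $v=1$ of Corollary~\ref{2vp}. With $m=2^{v}p$ and $v=1$ we have $m=2p$, and the hypotheses imposed on $e$ and $p$ agree verbatim, so Corollary~\ref{2vp} applies directly. The only thing to track is how its condition~(ii) degenerates at $v=1$: there the quantities involved are $w_i=\sum_{j=0}^{2^{v-1}-1}c_{i+jp}$ for $1\le i\le p-1$, and when $v=1$ the upper summation limit is $2^{v-1}-1=0$, so the sum reduces to the single term $j=0$, giving $w_i=c_i$ (and likewise $w_{p-i}=c_{p-i}$). Substituting this into condition~(ii) of Corollary~\ref{2vp} turns it into: there exists $1\le i\le(p-1)/2$ with $c_i+c_{p-i}\ne c_{m/2}$, which is precisely condition~(ii) of the present corollary; condition~(i), $c_{m/2}\ne0$, is carried over unchanged. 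This finishes the argument.

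Equivalently, one could read the result off Theorem~\ref{pr} with $r=1$. Then the matrix $U_1$ consists of the single column
\[
A_{0,1}=\big[\,c_{m/2},\; c_1+c_{p-1},\; c_2+c_{p-2},\; \ldots,\; c_{p-1}+c_{1}\,\big]',
\]
again using $w_i=c_i$, and Theorem~\ref{pr}(ii) requires $A_{0,1}$ to be non-constant. The symmetry $A_{i,k}(j)=A_{p^{k-1}-i,k}(p-1-j)$ of Lemma~\ref{comput}(4) --- here merely $c_i+c_{p-i}=c_{p-i}+c_i$ --- is what allows us to restrict the comparison to the entries indexed by $1\le i\le(p-1)/2$ versus the leading entry $c_{m/2}$, recovering condition~(ii).

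There is essentially no obstacle to overcome; the only subtlety worth recording is the degenerate index arithmetic at $v=1$, namely that the apparently empty sum $\sum_{j=0}^{0}$ in fact has one term, so $w_i=c_i$ rather than $w_i=0$. One should also note in passing that here $m/2=p$, so in the notation of (\ref{OURC}) the coefficient $c_{m/2}$ is $c_p\in GF(2^{n/2})$.
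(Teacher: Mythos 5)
Your proof is correct and follows the paper's own route exactly: the paper likewise specializes Corollary~\ref{2vp} to $v=1$, observing that $w_i=\sum_{j=0}^{0}c_{i+jp}=c_i$ when $m=2p$. One tiny quibble: in the form (\ref{OURC}) the coefficient $c_{m/2}=c_p$ lies in $GF(2^e)$ (not merely in $GF(2^{n/2})$ as your closing remark suggests), but this does not affect the argument.
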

\begin{proof}
Note that $w_i=c_i$ when $m=2p$. From Lemma \ref{2vp}, this corollary follows.
\end{proof}

\subsection{The case: $m=2^vpq$}
\begin{lemma}\label{CPOLYpq}
Let $m=2^vpq$, where $v>0$, $p,q$ are two different odd primes satisfying that $gcd(p-1,q-1)=2$, $ord_p(2)=p-1$, $ord_q(2)=q-1$, $2|\frac{(p-1)(q-1)}{4}$ and $gcd(e,(p-1)(q-1))=1$. Let
$c_f(x)$ be defined in (\ref{kp}). Then
$gcd(c_f(x),x^m+1)=1$ if and only if the following conditions hold.

{\rm(i)} $c_f(1)\neq 0$;

{\rm(ii)} $(x+1)c_f(x)\not \equiv 0\mod x^p+1;$

{\rm(iii)} $(x+1)c_f(x)\not \equiv 0\mod x^q+1;$

{\rm(iv)} $(x^p+1)(x^q+1)c_f(x)\not \equiv 0\mod x^{pq}+1;$
\end{lemma}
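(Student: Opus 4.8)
The plan is to mirror the structure of the proof of Lemma~\ref{CPOLY} for the prime-power case, replacing the chain of cyclotomic factors $Q_{p^k}(x)$ by the four factors of $x^{pq}+1$ given in Lemma~\ref{cycQ}(2). First I would reduce modulo the $2^v$-th power: since $x^m+1=(x^{pq}+1)^{2^v}$, we have $\gcd(c_f(x),x^m+1)=1$ if and only if $\gcd(c_f(x),x^{pq}+1)=1$. Then, using the factorization $x^{pq}+1=(x+1)Q_p(x)Q_q(x)Q_{pq}(x)$ into pairwise coprime polynomials, the latter is equivalent to the conjunction of $\gcd(c_f(x),x+1)=1$, $\gcd(c_f(x),Q_p(x))=1$, $\gcd(c_f(x),Q_q(x))=1$, and $\gcd(c_f(x),Q_{pq}(x))=1$.

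Next I would translate each of these four gcd conditions into the divisibility statements (i)--(iv). Condition (i): $\gcd(c_f(x),x+1)=1$ iff $c_f(1)\neq 0$, which is immediate since $x+1$ is linear. Condition (ii): I claim $\gcd(c_f(x),Q_p(x))=1$ iff $(x+1)c_f(x)\not\equiv 0\bmod x^p+1$. For the ``only if'' direction, note $x^p+1=(x+1)Q_p(x)$, so if $(x+1)c_f(x)\equiv 0\bmod x^p+1$ then $Q_p(x)\mid (x+1)c_f(x)/\gcd$; since $\gcd(x+1,Q_p(x))=1$ (as $p$ is odd, $Q_p(1)=p\equiv 1\neq 0$), we get $Q_p(x)\mid c_f(x)$. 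Conversely, if $Q_p(x)\mid c_f(x)$ then $(x+1)Q_p(x)=x^p+1$ divides $(x+1)c_f(x)$. Here I must invoke Lemma~\ref{cycQ}(3) (valid because $ord_p(2)=p-1$ and $\gcd(e,p-1)=1$, so in particular $\gcd(e,p-1)=1$) to know that $\gcd(c_f(x),Q_p(x))\neq 1$ forces $Q_p(x)\mid c_f(x)$; this is exactly what makes the equivalence with the cleaner divisibility condition work. Condition (iii) is identical with $p$ replaced by $q$, using $ord_q(2)=q-1$. Condition (iv): using $x^{pq}+1=(x^p+1)(x^q+1)c_f(x)$-type bookkeeping, observe that $Q_{pq}(x)=\dfrac{(x^{pq}+1)(x+1)}{(x^p+1)(x^q+1)}$, so $(x^p+1)(x^q+1)c_f(x)\equiv 0\bmod x^{pq}+1$ iff $(x+1)c_f(x)\equiv 0\bmod Q_{pq}(x)(x+1)$, i.e. iff $Q_{pq}(x)\mid c_f(x)$ (after cancelling $x+1$, which is coprime to $Q_{pq}(x)$). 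Then Lemma~\ref{cycQ}(4) gives that $\gcd(c_f(x),Q_{pq}(x))\neq 1$ iff $Q_{pq}(x)\mid c_f(x)$, closing the equivalence.

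Assembling these four equivalences gives the lemma: $\gcd(c_f(x),x^m+1)=1$ iff all of (i)--(iv) fail to be the ``$\equiv 0$'' situation, which is exactly the stated form. The main obstacle, and the place requiring care, is the bookkeeping with the quotients of $x^N+1$ by products of lower cyclotomic factors in (iv): one has to be sure that $(x^p+1)(x^q+1)$ and $x^{pq}+1$ share precisely the factor $(x+1)Q_p(x)Q_q(x)$, so that dividing through yields the single factor $Q_{pq}(x)$, and that the leftover $x+1$ can be cancelled because it is coprime to $Q_{pq}(x)$. Everything else is a routine combination of coprimality of distinct cyclotomic polynomials and the two ``all-or-nothing'' divisibility facts from Lemma~\ref{cycQ}(3)--(4).
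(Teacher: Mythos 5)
Your proposal is correct and follows essentially the same route as the paper: reduce $x^m+1$ to $x^{pq}+1=(x+1)Q_p(x)Q_q(x)Q_{pq}(x)$, split the gcd condition over the coprime cyclotomic factors, and convert each piece to the stated divisibility form using the ``all-or-nothing'' facts of Lemma~\ref{cycQ}(3)--(4), with the same cancellation $(x^p+1)(x^q+1)Q_{pq}(x)=(x+1)(x^{pq}+1)$ driving condition (iv). The only differences are cosmetic (the paper writes out the quotient polynomial $A(x)$ explicitly where you argue by cancelling common factors).
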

\begin{proof}
From {\rm(2)} in Lemma \ref{cycQ}, $x^m+1=[(x+1)Q_p(x)Q_q(x)Q_{pq}(x)]^{2^v}$. Then  $gcd(c_f(x),x^m+1)=1$  if and only if the following conditions hold.

{\rm(i')} $gcd(c_f(x),x+1)=1$;

{\rm(ii')} $gcd(c_f(x),Q_p(x))=1$;

{\rm(iii')} $gcd(c_f(x),Q_q(x))=1$;

{\rm(iv')} $gcd(c_f(x),Q_{pq}(x))=1$.

Similar to the proof of Lemma \ref{CPOLY}, we have {\rm(i)} and {\rm(i')}, {\rm(ii)} and {\rm(ii')}, {\rm(iii)} and {\rm(iii')} are equivalent correspondingly. We now prove that {\rm(iv)} and {\rm(iv')} are equivalent.

Suppose that {\rm(iv)} holds. If {\rm(iv')} does not hold, from {\rm(4)}  in Lemma \ref{cycQ}, there exists $A(x)\in GF(2^e)[x]$ such that
$c_f(x)=A(x)Q_{pq}(x)$. Then
$$
(x^p+1)(x^q+1)c_f(x)=A(x)(x^p+1)(x^q+1)Q_{pq}(x)=(x+1)A(x)(x^{pq}+1),
$$
which contradicts {\rm(iv)}.

Suppose that {\rm(iv')} holds. If {\rm(iv)} does not hold, there exists $A(x)\in GF(2^e)[x]$ such that
$$(x^p+1)(x^q+1)c_f(x)=A(x)(x^{pq}+1).$$
Note that $(x+1)|A(x)$, then
$$c_f(x)=\frac{A(x)}{x+1}Q_{pq}(x),$$
which contradicts {\rm(iv')}.

Hence, this lemma follows.
\end{proof}

For further discussion, let
\begin{align*}
w_{i}^{p}=&\sum_{j=0}^{\frac{m}{2p}-1}c_{i+jp},~&1\leq i\leq p-1\\
w_{i}^{q}=&\sum_{j=0}^{\frac{m}{2q}-1}c_{i+jq},~&1\leq i\leq q-1\\
w_{i}^{pq}=&\sum_{j=0}^{\frac{m}{2pq}-1}c_{i+jpq},~&1\leq i\leq pq-1
\end{align*}
Let
\begin{align*}
u_{i}^{p}=&\begin{cases}w_{i}^{p}+w_{p-i}^p,~&1\leq i\leq p-1 \cr c_{m/2},~&i=0\end{cases}\notag\\
u_{i+jp}^{p}=&u_{i}^{p}\\
u_{i}^{q}=&\begin{cases}w_{i}^{q}+w_{q-i}^q,~&1\leq i\leq q-1 \cr c_{m/2},~&i=0\end{cases}\notag\\
u_{i+jq}^{q}=&u_{i}^{q}\\
u_{i}^{pq}=&\begin{cases}w_{i}^{pq}+w_{pq-i}^{pq},~&1\leq i\leq pq-1 \cr c_{m/2},~&i=0\end{cases}\notag\\
u_{i+jpq}^{pq}=&u_{i}^{pq}
\end{align*}
\begin{lemma}
\label{computpq}
Let $c_f(x)$ be defined by (\ref{kp}). Then
\begin{align*}
(x+1)c_f(x)\equiv &\sum_{i=0}^{p-1}(u_{i}^{p}+u_{i-1}^{p})x^i,\mod x^p+1\\
(x+1)c_f(x)\equiv &\sum_{i=0}^{q-1}(u_{i}^{q}+u_{i-1}^{q})x^i,\mod x^q+1\\
(x^p+1)(x^q+1)c_f(x)\equiv &\sum_{i=0}^{pq-1}(u_{i}^{pq}+u_{i-p}^{pq}+u_{i-q}^{pq}+u_{i-p-q}^{pq})x^i, \mod x^{pq}+1.
\end{align*}
\end{lemma}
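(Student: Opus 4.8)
The plan is to compute each of the three congruences directly from the reduction formulas for $c_f(x)$ modulo $x^p+1$, $x^q+1$, and $x^{pq}+1$, exactly mirroring the argument of part (3) of Lemma~\ref{comput}. First I would establish the analogue of part (2) of Lemma~\ref{comput} in this setting: grouping the exponents of $c_f(x)$ by their residue modulo $p$ (respectively $q$, $pq$) and using $x^p\equiv 1$, one gets
\begin{align*}
c_f(x)\equiv \sum_{i=0}^{p-1}u_i^p x^i \mod x^p+1,
\end{align*}
and similarly $c_f(x)\equiv \sum_{i=0}^{q-1}u_i^q x^i \bmod x^q+1$ and $c_f(x)\equiv \sum_{i=0}^{pq-1}u_i^{pq}x^i \bmod x^{pq}+1$. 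This uses that $\frac{m}{2}=2^{v-1}pq$ is a multiple of $p$, of $q$, and of $pq$, so each residue class $i_0$ contributes exactly $\frac{m}{2p}$ (resp.\ $\frac{m}{2q}$, $\frac{m}{2pq}$) coefficients $c_{i_0+jp}$, whose sum is $w_{i_0}^p$; the $i_0=0$ term picks up the lone $c_{m/2}x^{m/2}$ summand (note $m/2\equiv 0\bmod p,q,pq$), giving $u_0^p=c_{m/2}$, and the symmetric pairing $c_i(x+x^{m-i})$ combines with $w_{p-i}^p$ as in the paper's definition of $u_i^p$.

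Next I would multiply through. For the first identity,
\begin{align*}
(x+1)c_f(x)\equiv \sum_{i=0}^{p-1}u_i^p x^{i+1}+\sum_{i=0}^{p-1}u_i^p x^i \mod x^p+1,
\end{align*}
and reindexing the first sum via $x^p\equiv 1$ (so that $u_{i-1}^p$ is read cyclically, which is legitimate because the periodicity relation $u_{i+jp}^p=u_i^p$ is built into the definition) yields $\sum_{i=0}^{p-1}(u_i^p+u_{i-1}^p)x^i$. The $q$ case is identical with $p$ replaced by $q$. For the third identity I would expand $(x^p+1)(x^q+1)=x^{p+q}+x^p+x^q+1$ and apply it to $\sum_{i=0}^{pq-1}u_i^{pq}x^i$, again reindexing each of the four resulting sums modulo $pq$ using $u_{i+jpq}^{pq}=u_i^{pq}$, to obtain $\sum_{i=0}^{pq-1}(u_i^{pq}+u_{i-p}^{pq}+u_{i-q}^{pq}+u_{i-p-q}^{pq})x^i$.

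There is essentially no serious obstacle here: the statement is a routine algebraic reduction and the only thing requiring care is bookkeeping with the cyclic index shifts (e.g.\ confirming that $u_{i-p}^{pq}$ for $i<p$ means $u_{i-p+pq}^{pq}$, which equals $u_{i-p}^{pq}$ by periodicity, so the notation is consistent) and checking that the divisibility facts $p\mid \frac{m}{2}$, $q\mid\frac{m}{2}$, $pq\mid\frac{m}{2}$ — which hold since $m=2^vpq$ with $v\ge 1$ — are exactly what is needed for the reductions of $c_f(x)$ in the first step to have the stated closed form with the given coefficient sums $w_i^p,w_i^q,w_i^{pq}$. The mildly delicate point to state explicitly is the treatment of the constant term $c_{m/2}x^{m/2}$: since $m/2$ is divisible by $p$, $q$, and $pq$, it always lands in the $i=0$ class, which is why $u_0^p=u_0^q=u_0^{pq}=c_{m/2}$, and this must be reconciled with the $1\le i\le p-1$ branch of the definitions so that no coefficient is double-counted or omitted.
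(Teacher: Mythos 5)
Your proposal is correct and follows exactly the route the paper intends: its entire proof is the one-line remark that the lemma follows ``from the similar discussion of (2) and (3) in Lemma~\ref{comput},'' and your argument is precisely that discussion carried out for the moduli $x^p+1$, $x^q+1$, $x^{pq}+1$ (reduce $c_f(x)$ to $\sum u_i x^i$ using the residue-class grouping and the pairing $w_{i}+w_{p-i}$, then multiply by $x+1$ or $(x^p+1)(x^q+1)$ and reindex cyclically). The only detail worth adding is that the terms $c_l(x^l+x^{m-l})$ with $l\equiv 0$ modulo the relevant modulus reduce to $2c_l=0$ in characteristic $2$, so the $i_0=0$ coefficient really is just $c_{m/2}$.
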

\begin{proof}
From the similar discussion of {\rm(2)} and {\rm(3)} in Lemma \ref{comput}, this lemma follows.
\end{proof}
\begin{theorem}
Let $m=2^vpq$, where $v>0$, $p,q$ are two different odd primes satisfying that $gcd(p-1,q-1)=2$, $ord_p(2)=p-1$, $ord_q(2)=q-1$, $2|\frac{(p-1)(q-1)}{4}$, and $gcd(e,(p-1)(q-1))=1$. Then
the quadratic Boolean function defined in (\ref{OURC}) is bent if and only if the following four conditions hold.

{\rm(i)} $c_{m/2}\neq 0;$

{\rm(ii)} There exists $1\leq i\leq (p-1)/2$ such that $w_{i}^p+w_{p-i}^p\neq c_{m/2}$;

{\rm(iii)} There exists $1\leq i\leq (q-1)/2$ such that $w_{i}^q+w_{q-i}^q\neq c_{m/2}$;

{\rm(iv)} There exists $0\leq i\leq pq-1$ such that $u_{i}^{pq}+u_{i-p}^{pq}+u_{i-q}^{pq}+u_{i-p-q}^{pq}\neq 0$.
\end{theorem}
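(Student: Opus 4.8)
The plan is to chain together the three results already in hand. By Theorem~\ref{gcd}, $f$ is bent if and only if $\gcd(c_f(x),x^m+1)=1$; by Lemma~\ref{CPOLYpq} this gcd condition is equivalent to the four divisibility conditions (i)--(iv) of that lemma; and Lemma~\ref{computpq} makes each of those divisibilities completely explicit in terms of the quantities $u_i^p$, $u_i^q$, $u_i^{pq}$. So the entire proof reduces to checking that conditions (i)--(iv) of Lemma~\ref{CPOLYpq} are respectively equivalent to conditions (i)--(iv) of the theorem, and I would do this one item at a time.

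For item (i): the one-line computation of Lemma~\ref{comput}(1) (now in the $pq$-notation) gives $c_f(1)=c_{m/2}$, so $c_f(1)\neq 0$ is the same as $c_{m/2}\neq 0$. For item (ii), and symmetrically (iii): Lemma~\ref{computpq} says $(x+1)c_f(x)\equiv \sum_{i=0}^{p-1}(u_i^p+u_{i-1}^p)x^i \bmod x^p+1$, and this representative has degree $<p$, hence it is the zero class if and only if every coefficient vanishes, i.e. $u_i^p=u_{i-1}^p$ for all $i$ read modulo $p$ (using $u_{i+jp}^p=u_i^p$, so $u_{-1}^p=u_{p-1}^p$). That is a consistent cyclic chain, so it forces $u_0^p=u_1^p=\cdots=u_{p-1}^p$; since $u_0^p=c_{m/2}$ by definition, this is equivalent to $u_i^p=c_{m/2}$ for $1\le i\le p-1$, i.e. $w_i^p+w_{p-i}^p=c_{m/2}$ for $1\le i\le p-1$. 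Because $w_i^p+w_{p-i}^p$ is invariant under $i\mapsto p-i$, it suffices to require this for $1\le i\le (p-1)/2$. Negating, condition (ii) of Lemma~\ref{CPOLYpq} is precisely condition (ii) of the theorem; replacing $p$ by $q$ throughout yields (iii).

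For item (iv): Lemma~\ref{computpq} gives $(x^p+1)(x^q+1)c_f(x)\equiv \sum_{i=0}^{pq-1}(u_i^{pq}+u_{i-p}^{pq}+u_{i-q}^{pq}+u_{i-p-q}^{pq})x^i \bmod x^{pq}+1$, again a representative of degree $<pq$, so it is nonzero in $GF(2^e)[x]/(x^{pq}+1)$ if and only if some coefficient is nonzero, which is condition (iv) of the theorem verbatim. Assembling the four equivalences with Theorem~\ref{gcd} and Lemma~\ref{CPOLYpq} finishes the proof.

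\textbf{Main obstacle.} Honestly there is little to fear here: the genuinely hard structural input --- the factorization $Q_{pq}(x)=g(x)g^*(x)$ into two \emph{distinct} irreducible factors over $GF(2^e)$ and the resulting ``all-or-nothing'' divisibility behaviour of $c_f(x)$ with respect to $Q_{pq}(x)$ --- has already been carried out in Lemma~\ref{cycQ}(4) and propagated through Lemma~\ref{CPOLYpq}. The only step needing a little care is the translation of item (ii) (and (iii)): namely arguing that a cyclically constant tuple $(u_0^p,\dots,u_{p-1}^p)$ is forced to have common value $c_{m/2}$, and that the index range can be halved via the symmetry $i\mapsto p-i$. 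Everything else is a direct transcription from Lemma~\ref{CPOLYpq} and Lemma~\ref{computpq}.
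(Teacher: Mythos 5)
Your proposal is correct and follows exactly the same route as the paper: Theorem~\ref{gcd} reduces bentness to $\gcd(c_f(x),x^m+1)=1$, Lemma~\ref{CPOLYpq} converts that to four divisibility conditions, and Lemma~\ref{computpq} (plus Lemma~\ref{comput}(1)) translates each into the stated coefficient conditions. The paper's own proof is just a terser version of this chain; your explicit verification of the cyclic-chain argument forcing $u_i^p\equiv c_{m/2}$ and the $i\mapsto p-i$ symmetry fills in details the paper leaves implicit.
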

\begin{proof}
From {\rm(1)} in Lemma \ref{comput} and Lemma \ref{computpq}, the four conditions in Lemma \ref{CPOLYpq} are equivalent to four conditions in this theorem correspondingly. From Theorem \ref{gcd}, this theorem follows.
\end{proof}
\begin{theorem}
Let $m=2^vpq$, where $v>0$, $p,q$ are two different odd primes satisfying that  $gcd(p-1,q-1)=2$, $ord_p(2)=p-1$, $ord_q(2)=q-1$, $2|\frac{(p-1)(q-1)}{4}$ and
$gcd(e,(p-1)(q-1))=1$. Let $f(x)$ be a quadratic Boolean function defined by
\begin{align}\label{cmbent}
f(x)=\sum_{i=1}^{\frac{m}{2}-1}Tr^n_1(c_ix^{1+2^{ei}})+Tr_1^{n/2}
(c_{m/2}x^{2^{1+n/2}})
\end{align}
where $c_i\in GF(2^e)$. The number of quadratic bent functions of the form
(\ref{cmbent}) is
$$
(2^e-1)2^{e\frac{m-2}{2}}(1-(\frac{1}{2^e})^{\frac{p-1}{2}})(1-
(\frac{1}{2^e})^{\frac{q-1}{2}})(1-(\frac{1}{2^e})^{\frac{(p-1)(q-1)}{2}}).
$$
\end{theorem}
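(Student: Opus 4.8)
The plan is to count coefficient tuples directly, using Theorem~\ref{gcd} to pass from bentness to a coprimality condition and then a Chinese Remainder argument. By Theorem~\ref{gcd}, $f$ is bent iff $\gcd(c_f(x),x^m+1)=1$ with $c_f$ as in (\ref{kp}); since $m=2^vpq$ and we are in characteristic $2$ we have $x^m+1=(x^{pq}+1)^{2^v}$, so bentness is equivalent to $\gcd(c_f(x),x^{pq}+1)=1$, i.e.\ to the residue $\bar c(x):=c_f(x)\bmod(x^{pq}+1)$ being a unit of the ring $R:=GF(2^e)[x]/(x^{pq}+1)$. Thus I would count the tuples $(c_1,\dots,c_{m/2})\in GF(2^e)^{m/2}$ for which the image of $c_f$ under the $GF(2^e)$-linear reduction map $\Psi\colon GF(2^e)^{m/2}\to R$, $(c_i)\mapsto c_f(x)\bmod(x^{pq}+1)$, is a unit, in three steps: (a) determine $\operatorname{Im}\Psi$ and show all its fibres are equicardinal; (b) count the units of $R$ inside $\operatorname{Im}\Psi$ via CRT; (c) multiply and simplify.

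For step (a): modulo $x^{pq}+1$ one has $x^{pq}\equiv1$, hence $x^m\equiv x^{m/2}\equiv1$, so $\bar c(x)\equiv c_{m/2}+\sum_{i=1}^{m/2-1}c_i(x^i+x^{-i})$; in particular $\bar c$ is \emph{palindromic}, i.e.\ lies in the subspace $S:=\{g\in R:g(x)=g(x^{-1})\}$, which has $GF(2^e)$-basis $\{1\}\cup\{x^i+x^{pq-i}:1\le i\le(pq-1)/2\}$ and hence dimension $(pq+1)/2$. Writing $\bar c(x)=\sum_{i=0}^{pq-1}u^{pq}_ix^i$ with $u^{pq}_i$ as defined before Lemma~\ref{computpq}, I would observe that $u^{pq}_0=c_{m/2}$ involves no other coordinate, while for $1\le j\le(pq-1)/2$ the coefficient $u^{pq}_j\ (=u^{pq}_{pq-j})$ is the sum of those $c_k$, $1\le k\le m/2-1$, with $k\equiv\pm j\pmod{pq}$; for distinct $j$ these index sets are pairwise disjoint and each is nonempty (it contains $k=j$, since $(pq-1)/2<pq\le m/2$). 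Hence $\Psi$ decomposes as a direct sum of surjective partial-sum maps onto the coordinates of $S$ together with a trivial part on the remaining $c_k$'s, so $\Psi$ maps \emph{onto} $S$; being a surjective linear map it then has all fibres of size $(2^e)^{m/2-(pq+1)/2}$. (Equivalently: the folded coefficients $u^{pq}_0,\dots,u^{pq}_{(pq-1)/2}$ are $GF(2^e)$-linearly independent forms in the $c_i$ that jointly surject onto $GF(2^e)^{(pq+1)/2}$.)

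For step (b): by Lemma~\ref{cycQ}(2), $x^{pq}+1=(x+1)Q_p(x)Q_q(x)Q_{pq}(x)$ with pairwise coprime factors, so $R\cong A_1\times A_p\times A_q\times A_{pq}$ with $A_1=GF(2^e)$ and $A_d=GF(2^e)[x]/Q_d(x)$. Under the present hypotheses ($ord_p(2)=p-1$, $ord_q(2)=q-1$, $\gcd(e,(p-1)(q-1))=1$) the polynomials $Q_p,Q_q$ are irreducible over $GF(2^e)$, so $A_p,A_q$ are fields of $GF(2^e)$-dimension $p-1,q-1$, whereas $Q_{pq}=g\,g^{*}$ splits into two irreducibles of degree $(p-1)(q-1)/2$, so $A_{pq}\cong GF(2^{e(p-1)(q-1)/2})\times GF(2^{e(p-1)(q-1)/2})$ — this is exactly the factorisation used in Lemma~\ref{cycQ}(4) and the proof of Lemma~\ref{CPOLYpq}. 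The automorphism $\sigma\colon x\mapsto x^{-1}$ of $R$ permutes primitive $d$-th roots of unity among themselves, hence fixes each $A_d$ setwise, so $S=R^{\sigma}=S_1\oplus S_p\oplus S_q\oplus S_{pq}$ with $S_d=A_d^{\sigma}$. I would then argue: $\sigma$ is trivial on $A_1$, so $\dim S_1=1$; on the fields $A_p,A_q$ it is the unique order-$2$ automorphism (it is not the identity, since $x^2\neq1$ there), so $S_p,S_q$ are the fixed subfields, of $GF(2^e)$-dimension $(p-1)/2$ and $(q-1)/2$; and on $A_{pq}$ it interchanges the two field factors (the roots of $g^{*}$ are the inverses of those of $g$, and $g\neq g^{*}$), so $S_{pq}$ is the diagonal, of dimension $(p-1)(q-1)/2$ — consistent with $1+\tfrac{p-1}{2}+\tfrac{q-1}{2}+\tfrac{(p-1)(q-1)}{2}=\tfrac{pq+1}{2}$. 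Finally an element of $S$ is a unit of $R$ iff each of its four components is nonzero, and in each $A_d$ a nonzero element of $S_d$ is automatically a unit of $A_d$ (clear when $A_d$ is a field; for $A_{pq}$ because a diagonal element $(a,a)$ is a unit iff $a\neq0$). Hence the number of units of $R$ lying in $S$ is $(2^e-1)(2^{e(p-1)/2}-1)(2^{e(q-1)/2}-1)(2^{e(p-1)(q-1)/2}-1)$.

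For step (c): multiplying the fibre size from (a) by the count from (b) gives
\[ (2^e)^{m/2-(pq+1)/2}(2^e-1)(2^{e(p-1)/2}-1)(2^{e(q-1)/2}-1)(2^{e(p-1)(q-1)/2}-1), \]
and using $\tfrac{pq+1}{2}=1+\tfrac{p-1}{2}+\tfrac{q-1}{2}+\tfrac{(p-1)(q-1)}{2}$ to pull one factor of $2^e$, $2^{e(p-1)/2}$, $2^{e(q-1)/2}$ and $2^{e(p-1)(q-1)/2}$ out of the four brackets turns this into
\[ (2^e-1)2^{e\frac{m-2}{2}}\Bigl(1-(\tfrac{1}{2^e})^{\frac{p-1}{2}}\Bigr)\Bigl(1-(\tfrac{1}{2^e})^{\frac{q-1}{2}}\Bigr)\Bigl(1-(\tfrac{1}{2^e})^{\frac{(p-1)(q-1)}{2}}\Bigr), \]
which is the claimed count. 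I expect step (a) — proving $\Psi$ surjects onto the palindromic subspace $S$ with equicardinal fibres — to be the main obstacle; the CRT bookkeeping in step (b), in particular pinning down $\dim_{GF(2^e)}S_d$ and the observation that nonzero elements of $S_d$ are units of $A_d$, is the other point that needs care.
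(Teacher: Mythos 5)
Your proof is correct, and it takes a genuinely different route from the paper's. The paper's entire argument consists of invoking Theorem 4.6 and Lemma 4.4 of \cite{TQX}: it writes the count as $\#(GF(2^e)^*)\cdot\#(\mathfrak{P}_{m-2}(Q_p(x)Q_q(x)Q_{pq}(x)))$ for a set $\mathfrak{P}$ defined in that reference, checks via Lemma \ref{cycQ} that $\frac{x^{pq}+1}{x+1}=Q_pQ_qQ_{pq}$ has the factorization shape required there, and reads off the answer from the external counting lemma. You instead carry out the enumeration from first principles: reduction of bentness to $\bar c$ being a unit of $R=GF(2^e)[x]/(x^{pq}+1)$, identification of the image of the coefficient map with the palindromic subspace $S=R^{\sigma}$ together with equicardinality of fibres, the CRT splitting $R\cong A_1\times A_p\times A_q\times A_{pq}$, and the computation of $\dim_{GF(2^e)}S_d$ from the action of $\sigma\colon x\mapsto x^{-1}$ on each factor (trivial on $A_1$, the unique order-$2$ Frobenius on the fields $A_p,A_q$, the swap of the two factors of $A_{pq}\cong F\times F$). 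Every fact you need about the factorization ($Q_p,Q_q$ irreducible since $ord_p(2^e)=p-1$ and $ord_q(2^e)=q-1$, and $Q_{pq}=gg^{*}$ with $g\neq g^{*}$) is already established in part (4) of Lemma \ref{cycQ} under exactly the stated hypotheses, so your argument closes without any appeal to \cite{TQX}. The two points you flagged as delicate are indeed the crux, and both of your resolutions are sound: the folded coefficients $u^{pq}_0,\dots,u^{pq}_{(pq-1)/2}$ are supported on pairwise disjoint, nonempty sets of the $c_k$ (so the map onto $S$ is surjective with constant fibre size $(2^e)^{m/2-(pq+1)/2}$), and the dimension check $1+\frac{p-1}{2}+\frac{q-1}{2}+\frac{(p-1)(q-1)}{2}=\frac{pq+1}{2}$ confirms the decomposition of $S$. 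What your approach buys is a complete, self-contained and checkable proof within the paper; what the paper's approach buys is brevity, at the cost of making the theorem unverifiable without consulting \cite{TQX}.
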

\begin{proof}
Similar to the proof of Theorem 4.6 in \cite{TQX}, we can have that the number of quadratic bent functions of the form (\ref{cmbent}) is
$$
\#(GF(2^e)^*)\#(\mathfrak{P}_{m-2}(\frac{x^{pq}+1}{x+1})),
$$
where $\mathfrak{P}_{*}(*)$ is defined in \cite{TQX}. From Lemma \ref{cycQ},
$$
\frac{x^{pq}+1}{x+1}=Q_p(x)Q_q(x)Q_{pq}(x)
$$
satisfies the factorization of  (21) in \cite{TQX}. From Lemma 4.4 in
\cite{TQX},
$$
\#(\mathfrak{P}_{m-2}(\frac{x^{pq}+1}{x+1}))=2^{e\frac{m-2}{2}}(1-
(\frac{1}{2^e})^{\frac{p-1}{2}})(1-(\frac{1}{2^e})^{\frac{q-1}{2}})
(1-(\frac{1}{2^e})^{\frac{(p-1)(q-1)}{2}}).$$
Hence, this theorem follows.
\end{proof}
\section{Conclusion}

In this paper, we generalize the work of Yu and Gong \cite{YG} and present conditions for the bentness of quadratic Boolean functions of the form (\ref{OURC}) for two cases:

{\rm(1)} $m=2^{v}p^r$ and $gcd(e,p-1)=1$, where $v>0$, $r>0$ and $p$ is an odd prime satisfying that $ord_p(2)=p-1$ or $ord_p(2)=\frac{p-1}{2}~((p-1)/2\text{ is odd})$;

{\rm(2)} $m=2^vpq$, where $v>0$, $p,q$ are two different odd primes satisfying that  $gcd(p-1,q-1)=2$, $ord_p(2)=p-1$, $ord_q(2)=q-1$, $2|\frac{(p-1)(q-1)}{4}$ and $gcd(e,(p-1)(q-1))=1$.

For Case {\rm(1)}, the numeration of quadratic bent functions is presented in   \cite{TQX}. For Case {\rm(2)}, we present the enumeration of quadratic bent functions in this paper.

\section*{Acknowledgment}

The authors
acknowledge support from
the Natural Science Foundation of China
(Grant No.10990011 \& No. 61272499). Yanfeng Qi also acknowledges support from Aisino Corporation Inc.





\bibliographystyle{model1c-num-names}







\end{document}